\newtheorem{theorem}{Theorem}
\newtheorem{definition}{Definition}
\newtheorem{claim}{Claim}
\newtheorem{corollary}{Corollary}
\newtheorem{proposition}{Proposition}
\newtheorem{problem}{Problem}
\newcommand{\CG}{{\rm CG}}
\newcommand{\CP}{{\rm CP}}
\newcommand{\NC}{{\rm NC}}
\newcommand{\QED}{$\Box$}
\newcommand{\smallqed}{{\tiny ($\Box$)}}
\newcommand{\1}{\vspace{0.1cm}}
\newcommand{\proof}{\noindent\textbf{Proof. }}
\let\oldenumerate\enumerate
\renewcommand{\enumerate}{
  \oldenumerate
  \setlength{\itemsep}{1pt}
  \setlength{\parskip}{0pt}
  \setlength{\parsep}{0pt}
}
\def\vertex(#1){\put(#1){\circle*{2}}}
\def\vertexo(#1){\put(#1){\circle{2}}}
\def\vert(#1){\put(#1){\circle*{1.5}}}
\def\verto(#1){\put(#1){\circle{1.5}}}
\def\lab(#1)#2{\put(#1){\makebox(0,0)[c]{#2}}}
\begin{document}

\title{On the coalition number of trees}

\author{$^1$Davood Bakhshesh, \, $^2$Michael A. Henning\thanks{Research supported in part by the University of Johannesburg.}, \, and $^3$Dinabandhu Pradhan\thanks{Corresponding author.} \\ \\
$^1$Department of Computer Science\\
University of Bojnord \\
Bojnord, Iran \\
\small \tt Email: d.bakhshesh@ub.ac.ir \\
\\
$^2$Department of Mathematics and Applied Mathematics \\
University of Johannesburg \\
Auckland Park, 2006 South Africa\\
\small \tt Email: mahenning@uj.ac.za\\
\\
$^3$Department of Mathematics and Computing \\
Indian Institute of Technology (ISM) \\
Dhanbad, India \\
\small \tt Email: dina@iitism.ac.in}

\date{}
\maketitle
%\linenumbers

\begin{abstract}
Let $G$ be a graph with vertex set $V$ and of order~$n = |V|$, and let $\delta(G)$ and $\Delta(G)$ be the minimum and maximum degree of $G$, respectively. Two disjoint sets $V_1, V_2 \subseteq V$ form a coalition in $G$ if none of them is a dominating set of $G$ but their union $V_1\cup V_2$ is.  A vertex partition $\Psi=\{V_1,\ldots, V_k\}$ of  $V$ is a coalition partition of $G$ if  every set $V_i\in \Psi$ is either a dominating set of $G$ with the cardinality $|V_i|=1$,  or is not a dominating set  but for some $V_j\in \Psi$,  $V_i$ and $V_j$ form a coalition.  The maximum cardinality of a  coalition partition of $G$ is the coalition number $\mathcal{C}(G)$ of $G$. Given a coalition partition $\Psi = \{V_1, \ldots, V_k\}$ of $G$, a coalition graph $\CG(G, \Psi)$ is associated on $\Psi$ such that there is a one-to-one correspondence between its vertices and the members of $\Psi$, where two vertices of $\CG(G, \Psi)$ are adjacent if and only if the corresponding sets form a coalition in $G$. In this paper, we partially solve one of the open problems posed in Haynes et al.~\cite{coal0} and we solve two open problems posed by Haynes et al.~\cite{coal1}. We characterize all graphs $G$ with $\delta(G) \le 1$ and $\mathcal{C}(G)=n$, and we characterize all trees $T$ with $\mathcal{C}(T)=n-1$. We determine the number of coalition graphs that can be defined by all coalition partitions of a given path.  Furthermore, we show that there is no universal coalition path, a path whose coalition partitions defines all possible coalition graphs.
\end{abstract}

\indent
{\small \textbf{Keywords:}  Coalition number; Domination number; Coalition partition; Coalition graphs.} \\
\indent {\small \textbf{AMS subject classification:} 05C69}

\section{Introduction}

Let $G$ be a graph with vertex set $V = V(G)$. Throughout this paper, we only consider graphs without multiple edges and loops. Two vertices are \emph{neighbors} if they are adjacent. A \emph{dominating set} of a graph $G$ is a set $S$ of vertices of $G$ such that every vertex in $G$ is dominated by a vertex in $S$, where a vertex \emph{dominates} itself and its neighbors. Equivalently, $S \subseteq V$ is a dominating set of $G$ if every vertex of $V \setminus S$ is adjacent to a vertex of $S$. The minimum cardinality of a dominating set of $G$ is the \emph{domination number} of $G$, denoted by $\gamma(G)$. If $X,Y \subseteq V(G)$, then the set $X$ \emph{dominates} the set $Y$ if every vertex in $Y$ is dominated by at least one vertex in $X$. We refer the reader to the books~\cite{HaHeHe-20,HaHeHe-21} to study an overview of dominating sets in graphs.

In~2020, Haynes, Hedetniemi, Hedetniemi, McRae, and Mohan~\cite{coal0} presented a graph theoretic model of a coalition, and introduced the concept of a \emph{coalition} in graphs. They defined a pair of sets $V_1, V_2 \subseteq V$ to be a \emph{coalition} in $G$ if none of them is a dominating set of $G$ but $V_1 \cup V_2$ is. Such a pair $V_1$ and $V_2$ is said to \emph{form a coalition}, and are called \emph{coalition partners}.  A vertex partition $\Psi = \{V_1,\ldots, V_k\}$ of  $V$ is a \emph{coalition partition} of $G$, abbreviated a $c$-\emph{partition} in~\cite{coal0}, if every set~$V_i\in \Psi$ is either a dominating set of $G$ with cardinality $|V_i|=1$,  or is not a dominating set  but for some $V_j\in \Psi$,   $V_i$ and $V_j$ form a coalition.  The maximum cardinality of a coalition partition of $G$ is called the \emph{coalition number} of $G$,  denoted by $\mathcal{C}(G)$. A coalition partition of $G$ of cardinality  $\mathcal{C}(G)$ is called a \emph{$\mathcal{C}$-partition of $G$}. A motivation of this graph theory model of a coalition is given by Haynes et al. in their series of papers on coalitions in~\cite{coal0,coal1,coal2,coal3}.

Given a coalition partition $\Psi = \{V_1, \ldots, V_k\}$ of $G$, a coalition graph $\CG(G, \Psi)$ is associated on $\Psi$ such that there is a one-to-one correspondence between its vertices and the members of $\Psi$, where two vertices of $\CG(G, \Psi)$ are adjacent if and only if the corresponding sets form a coalition in $G$.

For notation and graph theory terminology not defined herein, we in general follow~\cite{HeYe-book}. Specifically, let $G$ be a graph with vertex set $V(G)$ and edge set $E(G)$, and of order~$n(G) = |V(G)|$ and size $m(G) = |E(G)|$. For a set of vertices $S\subseteq V(G)$, the subgraph induced by $S$ is denoted by $G[S]$. Two vertices in $G$ are \emph{neighbors} if they are adjacent. The \emph{open neighborhood} $N_G(v)$ of a vertex $v$ in $G$ is the set of neighbors of $v$, while the \emph{closed neighborhood} of $v$ is the set $N_G[v] = \{v\} \cup N(v)$. We denote the \emph{degree} of $v$ in $G$ by $\deg_G(v) = |N_G(v)|$. The minimum and maximum degree in $G$ is denoted by $\delta(G)$ and $\Delta(G)$, respectively. An \emph{isolated vertex} is a vertex of degree~$0$, and an \emph{isolate}-\emph{free graph} is a graph that contains no isolated vertex. A vertex of degree~$1$ is called a \emph{leaf}, and its unique neighbor a \emph{support vertex}. A graph is \emph{isolate}-\emph{free} if it contains no isolated vertex. A vertex of degree~$n(G)-1$ in $G$ is a \emph{universal vertex}, also called a \emph{full vertex} in the literature, of $G$. For a set $S \subseteq V(G)$, its \emph{open neighborhood} is the set $N_G(S) = \cup_{v \in S} N_G(v)$, and its \emph{closed neighborhood} is the set $N_G[S] = N_G(S) \cup S$. If the graph $G$ is clear from the context, we omit writing it in the above expressions. For example, we simply write $V$, $E$, $n$, $m$, $N(v)$ and $N(S)$ rather than $V(G)$, $E(G)$, $n(G)$, $m(G)$, $N_G(v)$ and $N_G(S)$, respectively.

A \emph{vertex cover} of a graph $G$ is a set $S$ of vertices such that every edge in $E(G)$ is incident with at least one vertex in~$S$. The \emph{vertex covering number} $\beta(G)$, also denoted $\tau(G)$ in the literature, is the minimum cardinality of a vertex cover of $G$. For a positive integer $k$, we let $[k] = \{1, \ldots, k\}$.
	
\subsection{Motivation and known results}

In this paper, we continue the study of coalitions in graphs. Our immediate aim is to answer, in part or fully, the following intriguing problems posed by the Haynes et al. in their recent series of paper given in~\cite{coal0,coal1,coal2}. Haynes et al.~\cite{coal0} posed the following open problem.
	
\begin{problem}{\rm (\cite{coal0})}
\label{prob1}
{\rm Characterize the graphs $G$ satisfying $\mathcal{C}(G)=n(G)$.}
\end{problem}

For $r, s \ge 1$, a \emph{double star} $S(r,s)$ is a tree with exactly two (adjacent) vertices that are not leaves, with one of the vertices having $r$ leaf neighbors and the other $s$ leaf neighbors. The double star $S(2,2)$, for example, is illustrated in Figure~\ref{f:fig1}(a).  The bull graph $B$, illustrated in Figure~\ref{f:fig1}(b), is a graph obtained from a triangle by adding two disjoint pendant edges. Let $F_1$ be obtained from a bull graph by deleting one of the vertices of degree~$1$, and let $F_2$ be obtained from a $4$-cycle by adding a pendant edges. The graphs $F_1$ and $F_2$ are illustrated in Figures~\ref{f:fig1}(c) and~\ref{f:fig1}(d), respectively.

\begin{figure}[htb]
\begin{center}
\begin{tikzpicture}[scale=.8,style=thick,x=1cm,y=1cm]
\def\vr{2.75pt} % \vr = vertex radius;
% define vertices
\path (0,0) coordinate (v1);
%\path (0,1) coordinate (v2);
\path (0,2) coordinate (v3);
\path (1.25,1) coordinate (v4);
\path (2.25,1) coordinate (v5);
\path (3.5,0) coordinate (v6);
\path (3.5,2) coordinate (v7);
\draw (v1)--(v4);%--(v2);
\draw (v3)--(v4)--(v5)--(v6);
\draw (v5)--(v7);
\draw (v1) [fill=white] circle (\vr);
%\draw (v2) [fill=white] circle (\vr);
\draw (v3) [fill=white] circle (\vr);
\draw (v4) [fill=white] circle (\vr);
\draw (v5) [fill=white] circle (\vr);
\draw (v6) [fill=white] circle (\vr);
\draw (v7) [fill=white] circle (\vr);
%%
%%%%%%%%%%
\draw (1.75,-0.75) node {{\small (a) $S(2,2)$}};
%%%%%%%%%%%%%%%%%%%%%%%
% define vertices
\path (6.5,0) coordinate (u1);
\path (5.5,1) coordinate (u2);
\path (5.5,2) coordinate (u3);
\path (7.5,1) coordinate (u4);
\path (7.5,2) coordinate (u5);
\draw (u1)--(u2)--(u4)--(u1);
\draw (u2)--(u3);
\draw (u4)--(u5);
\draw (u1) [fill=white] circle (\vr);
\draw (u2) [fill=white] circle (\vr);
\draw (u3) [fill=white] circle (\vr);
\draw (u4) [fill=white] circle (\vr);
\draw (u5) [fill=white] circle (\vr);
%%%%%%%%%%
\draw (6.5,-0.75) node {{\small (b) $B_1$}};
%%%%%%%%%%%%%%%%%%%%%%%
% define vertices
\path (10.5,0) coordinate (w1);
\path (9.5,1) coordinate (w2);
\path (9.5,2) coordinate (w3);
\path (11.5,1) coordinate (w4);
%\path (11.5,2) coordinate (w5);
%
\draw (w1)--(w2)--(w4)--(w1);
\draw (w2)--(w3);
%\draw (w4)--(w5);
%
\draw (w1) [fill=white] circle (\vr);
\draw (w2) [fill=white] circle (\vr);
\draw (w3) [fill=white] circle (\vr);
\draw (w4) [fill=white] circle (\vr);
%\draw (w5) [fill=white] circle (\vr);
%%%%%%%%%%
\draw (10.5,-0.75) node {{\small (c) $F_1$}};
%%%%%%%%%%%%%%%%%%%%%%%
% define vertices
\path (13.5,0) coordinate (x1);
\path (13.5,1) coordinate (x2);
\path (13.5,2) coordinate (x3);
\path (15.5,1) coordinate (x4);
\path (15.5,0) coordinate (x5);
\draw (x1)--(x2)--(x4)--(x5)--(x1);
\draw (x2)--(x3);
\draw (x1) [fill=white] circle (\vr);
\draw (x2) [fill=white] circle (\vr);
\draw (x3) [fill=white] circle (\vr);
\draw (x4) [fill=white] circle (\vr);
\draw (x5) [fill=white] circle (\vr);
%%%%%%%%%%
\draw (14.5,-0.75) node {{\small (d) $F_2$}};
%%%%%%%%%%%%%%%%%%%%%%%
\end{tikzpicture}
\end{center}
\begin{center}
\vskip -0.5 cm
\caption{Four graphs of small orders}
\label{f:fig1}
\end{center}
\end{figure}
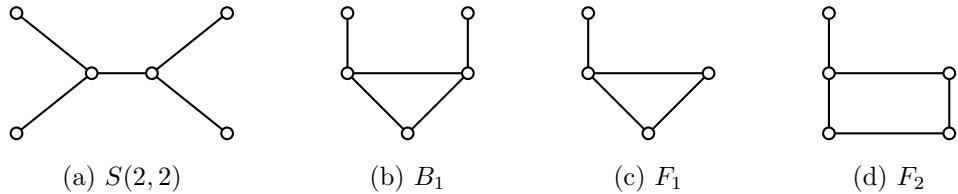
	
Let $\cal F$  be a family of graphs consisting of 18 graphs: $K_1$, $K_2$, $\overline{K_2}$, $K_1\cup K_2$, $P_3$, $K_3$, $K_{1,3}$, $2K_2$, $P_4$,  $C_4$, $F_1$,  $K_4-e$, $P_2\cup P_3$, $F_2$, $B_1$, $P_5$, $S(1,2)$, and $S(2,2)$. Haynes et al.~\cite{coal1} proved the following result.

\begin{theorem}{\rm (\cite{coal1})}
\label{t:known1}
If $\Psi$ is a coalition partition of a path $P_k$, then $\CG(P_k,\Psi) \in {\cal F}$.
\end{theorem}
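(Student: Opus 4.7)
The strategy is a two-step analysis: first bound the size of $\Psi$, and then enumerate the coalition graphs that can arise.

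For the first step, I would show that $|\Psi| \le 6$ for every coalition partition $\Psi$ of every path $P_k$. The small cases $k \le 3$ are checked directly, so assume $k \ge 4$. The key structural observation is that every dominating set of $P_k$ must intersect both the left end-pair $L$ (consisting of the first two vertices of the path) and the right end-pair $R$ (the last two vertices), since the endpoints must be dominated. I would classify each $V_i \in \Psi$ according to whether it meets $L$, $R$, both, or neither. Because $|L|=|R|=2$, at most two parts meet $L$ and at most two meet $R$. Any part meeting neither $L$ nor $R$ must have a coalition partner meeting both, since the union must dominate both endpoints. A domatic-style argument then bounds the number of such ``interior'' parts: together with a common partner they must all dominate $P_k$, so restricting to the complement of the partner's closed neighborhood they form pairwise disjoint dominating sets of a subpath, of which only boundedly many exist. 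A careful case analysis on the type distribution then yields $|\Psi| \le 6$.

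For the second step, since $|\Psi|\le 6$, the coalition graph $\CG(P_k,\Psi)$ lies in a finite universe of graphs on at most six vertices. I would enumerate, for each value of $|\Psi|$ and each possible shape of partition (up to the left-right symmetry of $P_k$), the valid coalition partitions and compute the resulting coalition graph. Symmetry substantially reduces the case count, and for large $k$ the partitions fall into finitely many shapes up to translation along the path, because the rigidity of paths forces each part to sit in a narrow ``window'' of vertices determined by its coalition partners. Comparing the list of coalition graphs so obtained with ${\cal F}$ completes the proof.

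The main obstacle is the enumeration in the second step: one must confirm that no coalition graph outside ${\cal F}$ arises. The chief tool is the rigidity of paths --- each vertex dominates at most three consecutive vertices --- so once the types of parts are fixed (in the sense of Step~1), the coalition relationships are almost forced, and the number of ``essentially different'' partitions is small. The case analysis is tedious but mechanical, and produces exactly the $18$ graphs listed in ${\cal F}$.
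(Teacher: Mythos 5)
The theorem you are proving is quoted in this paper from the literature without proof, so there is no in-paper argument to compare against; judged on its own terms, your proposal has a genuine gap in Step~2. Step~1 is fine in outcome --- $|\Psi|\le \mathcal{C}(P_k)\le 6$ is exactly Theorem~\ref{thm:path}/Corollary~\ref{cor:path} --- although your domatic-style sketch would need real work to reach the bound $6$ rather than $8$. The problem is Step~2. Your plan is to enumerate coalition partitions ``up to symmetry and translation,'' on the premise that each part sits in a narrow window of the path determined by its coalition partners. That premise is false: the parts of a coalition partition of $P_k$ are typically spread across the whole path (for instance, unions of residue classes modulo $2$ or $3$, as in every construction in Section~4 of this paper), so the family of coalition partitions does not collapse to finitely many local shapes as $k\to\infty$, and the proposed enumeration is not a finite procedure.

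The way to make Step~2 work is to move the enumeration from partitions to coalition graphs, using the two structural bounds recorded here as Proposition~\ref{prop:Delta} and Proposition~\ref{p:path}: $\Delta(\CG(P_k,\Psi))\le\Delta(P_k)+1\le 3$, and, taking $v$ to be a leaf of $P_k$, $\beta(\CG(P_k,\Psi))\le\deg_{P_k}(v)+1=2$. Combined with $|\Psi|\le 6$, these confine $\CG(P_k,\Psi)$ to a short explicit list of graphs on at most six vertices with maximum degree at most $3$ and vertex cover number at most $2$; the few candidates on that list lying outside ${\cal F}$ are then excluded by ad hoc arguments (for example, an isolated vertex of the coalition graph forces a singleton dominating set, hence $k\le 3$, which eliminates $\overline{K_3}$, $K_1\cup P_3$, and the like). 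Without the degree and vertex-cover bounds your case analysis has no finite universe to range over, so as written the proposal does not constitute a proof.
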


A path is called a \emph{universal coalition path} if all 18 graphs of $\cal F$ can be defined by the coalition partitions of the path. Haynes et al.~\cite{coal1} posed the following two open problems.
	
\begin{problem}{\rm (\cite{coal1})}
\label{prob2}
{\rm Given a positive integer $k$, how many coalition graphs can be defined by the coalition partitions of a path $P_k$?}
\end{problem}
\begin{problem}{\rm (\cite{coal1})}
\label{prob3}
{\rm	Does there exist a positive integer $k$ such that all 18 graphs of $\cal F$ can be defined by the coalition partitions  of $P_k$? If so, what is the smallest universal coalition path?}
\end{problem}

In this paper, we  characterize all graphs $G$ with $\delta(G)=1$ and $\mathcal{C}(G)=n$. Moreover, we characterize all trees $T$ with  $\mathcal{C}(T)=n$  and all trees $T$ with $\mathcal{C}(T)=n-1$. This solves part of Problem~\ref{prob1}. On the other hand, we solve Problem~\ref{prob2} and Problem~\ref{prob3}. In particular, we theoretically and empirically determine the number of coalition graphs that can be defined by the coalition partitions of path $P_k$. Consequently, we show that there is no universal coalition path.

The coalition number of a path and a cycle is determined in~\cite{coal0}.

\begin{theorem}{\rm (\cite{coal0})}
\label{thm:path}
The following hold for a path $P_n$ and a cycle $C_n$. \\ [-24pt]
\begin{enumerate}
\item[{\rm (a)}] $C(P_n) = n$ if $n \le 4$, $C(P_n) = 4$ if $n = 5$, $C(P_n) = 5$ if $6 \le n \le 9$, and $C(P_n) = 6$ if $n \ge 10$.
\item[{\rm (b)}] $C(C_n) = n$ if $3 \le n \le 6$, $C(C_n) = 5$ if $n = 7$, and $C(C_n) = 6$ if $n \ge 8$.
\end{enumerate}
\end{theorem}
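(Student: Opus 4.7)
The plan is to prove each part by direct verification for small $n$ and a structural upper bound matched by an explicit construction for large $n$. The recurring key observation is that neither $P_n$ (for $n\ge 3$) nor $C_n$ has a universal vertex, so in any coalition partition every part is a non-dominating set that must have a coalition partner in the partition; moreover, every dominating set of $P_n$ must meet the ``end zones'' $\{v_1,v_2\}$ and $\{v_{n-1},v_n\}$, which anchors the structural analysis.

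\textbf{Paths.} For $n\le 4$ the singleton partition is a coalition partition by direct check. For $n=5$, the middle vertex $v_3$ lies in no $2$-vertex dominating set (the only such sets are $\{v_1,v_4\}$, $\{v_2,v_4\}$, and $\{v_2,v_5\}$), so $\{v_3\}$ cannot appear as a singleton in any coalition partition; a short enumeration then gives $\mathcal{C}(P_5)\le 4$, and $\{\{v_1\},\{v_2\},\{v_3,v_4\},\{v_5\}\}$ witnesses equality. For $6\le n\le 9$, an explicit $5$-partition combined with a case analysis ruling out any $6$-partition yields $\mathcal{C}(P_n)=5$. For $n\ge 10$, the lower bound is witnessed by a concrete $6$-partition in which the bulk of the interior is absorbed into one large part and the two ends are split into small parts that pair with the large one or with each other. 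For the upper bound, I would classify the parts of any coalition partition $\Psi$ according to which of the two end zones they meet: since each zone has only two vertices, at most two parts meet each zone, and any part meeting neither zone must be paired with a part meeting both. Such ``both-end'' parts already contain at least two specified vertices and dominate only a bounded number of interior vertices, forcing their interior partners to be comparatively large; a counting argument trading the sizes of both-end parts against the number of interior parts they can serve then pins the maximum at $6$.

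\textbf{Cycles.} The same strategy applies. For $3\le n\le 6$ the singleton partition is a coalition partition; for $n=7$, the equality $\gamma(C_7)=3$, the vertex-transitivity, and the scarcity of small dominating sets containing any specified vertex together yield $\mathcal{C}(C_7)\le 5$, matched by an explicit $5$-partition; for $n\ge 8$ an explicit $6$-partition gives the lower bound, and an end-zone argument (after arbitrarily designating two vertices of $C_n$ to play the roles of ``left'' and ``right'' anchors) gives $\mathcal{C}(C_n)\le 6$. The main obstacle throughout is the structural upper bound for long paths and cycles: one must precisely balance the sizes of the both-end parts against the number of interior parts they can serve as coalition partners, and it is exactly this trade-off that pins the answer at $6$ rather than at $7$ or more.
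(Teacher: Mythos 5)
The paper does not prove this statement; it is quoted from~\cite{coal0}, so there is no in-paper argument to compare against. Judged on its own terms, your outline identifies the right skeleton (explicit partitions for the lower bounds, a counting argument keyed to the vertices that every dominating set must contain for the upper bounds), but it has genuine gaps exactly where the theorem is hard. First, your end-zone count for $P_n$ does not reach $6$ as written: at most two parts meet $\{v_1,v_2\}$ and at most two meet $\{v_{n-1},v_n\}$, and every part meeting neither zone must partner with a part meeting both; but if two parts each meet both zones, each of them can (by the degree bound $\Delta(\CG(G,\Psi))\le\Delta(G)+1=3$) serve up to three interior partners, giving $2+6=8$. The ``trade-off between the sizes of the both-end parts and the number of interior parts they can serve'' that you invoke to get from $8$ down to $6$ is precisely the substance of the proof, and you do not carry it out. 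Second, the transfer to cycles fails as described: $C_n$ has no leaf, so a dominating set of $C_n$ need not meet any prescribed two-element vertex set, and ``arbitrarily designating two vertices as anchors'' gives you nothing. The correct anchor for a cycle is a closed neighborhood $N[v]$, which has three vertices, so the naive count is even weaker (up to three anchor parts, each with up to three partners); closing this requires the sharper bound $\mathcal{C}(G)\le\frac{1}{4}(\Delta+3)^2$ of Theorem~\ref{thm:bound1}(a) or an equivalent refinement, not the leaf-based argument.

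Beyond the upper bounds, the cases $6\le n\le 9$ for paths and $n=7$ for cycles, where the answer is $5$ rather than $6$, are not covered by any general bound and need their own finite but nontrivial exclusion arguments; ``a case analysis ruling out any $6$-partition'' is a placeholder, not a proof. Likewise none of the witnessing $5$- and $6$-partitions for $n\ge 6$ is actually exhibited. Your treatment of $n\le 5$ for paths is, by contrast, complete and correct: the observation that $v_3$ lies in no $2$-element dominating set of $P_5$ does force $\mathcal{C}(P_5)\le 4$, since a $5$-part partition must be the singleton partition. Overall the proposal is a reasonable plan, but the two upper-bound arguments that constitute the core of the theorem are not established.
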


As consequence of Theorem~\ref{thm:path}, we have the following result.

\begin{corollary}{\rm (\cite{coal0})}
\label{cor:path}
If $G$ is a path or a cycle, then $C(G) \le 6$.
\end{corollary}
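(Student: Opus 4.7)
The plan is to observe that this corollary is an immediate consequence of Theorem~\ref{thm:path}, so the proof reduces to verifying that in each of the cases enumerated there, the stated value of $\mathcal{C}(G)$ is at most~$6$. I would therefore simply run through the seven cases listed in parts~(a) and~(b) of Theorem~\ref{thm:path} and check the bound in each one.

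First I would handle the path $P_n$. For $n \le 4$ we have $\mathcal{C}(P_n) = n \le 4 \le 6$; for $n = 5$ the value is $4$; for $6 \le n \le 9$ it is $5$; and for $n \ge 10$ it is $6$. In every case $\mathcal{C}(P_n) \le 6$, with equality attained exactly when $n \ge 10$. Next, for the cycle $C_n$, the theorem gives $\mathcal{C}(C_n) = n$ when $3 \le n \le 6$ (so at most $6$), $\mathcal{C}(C_7) = 5$, and $\mathcal{C}(C_n) = 6$ for $n \ge 8$. Again the upper bound $6$ holds in all cases.

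There is no real obstacle here: the bound $6$ is literally the maximum over the finite list of values produced by Theorem~\ref{thm:path}, and the corollary serves only to record this ceiling in a compact form for later reference. The only thing to note is that the bound is tight, since it is realized already by $P_{10}$ (and by $C_8$), so no further argument is needed to show that $6$ cannot be replaced by a smaller constant.
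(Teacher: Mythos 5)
Your proposal is correct and matches the paper exactly: the corollary is stated there as an immediate consequence of Theorem~\ref{thm:path}, obtained by taking the maximum of the values listed in parts (a) and (b). Nothing further is needed.
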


We shall need the following properties of a $c$-partition in a graph.

\begin{proposition}{\rm (\cite{coal0})}
\label{prop:Delta}
If $\Psi$ is a $c$-partition of a graph $G$, then $\Delta(\CG(G,\Psi)) \le \Delta(G) + 1$.
\end{proposition}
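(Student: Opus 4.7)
The plan is to bound the degree of an arbitrary vertex of $\CG(G,\Psi)$ by exhibiting, for each part $V_i \in \Psi$, a small set of vertices of $G$ that every coalition partner of $V_i$ must hit.

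Let $\Psi = \{V_1,\ldots,V_k\}$ be a $c$-partition of $G$ and fix $i \in [k]$. I would first dispose of the trivial case where $V_i$ itself is a dominating set of $G$. By definition of a $c$-partition, this forces $|V_i|=1$ and, since $V_i$ is dominating, no other part $V_j$ can form a coalition with $V_i$ (the definition of coalition requires that neither partner is a dominating set). Hence $\deg_{\CG(G,\Psi)}(V_i) = 0 \le \Delta(G)+1$.

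In the main case, $V_i$ is not a dominating set of $G$, so there exists some vertex $u \in V(G)$ with $N_G[u] \cap V_i = \emptyset$. Now consider any coalition partner $V_j$ of $V_i$ in $\Psi$. Since $V_i \cup V_j$ must be a dominating set of $G$ and $V_i$ does not dominate $u$, the set $V_j$ must contain a vertex of $N_G[u]$; that is, $V_j \cap N_G[u] \neq \emptyset$. Because the sets of $\Psi$ are pairwise disjoint, distinct coalition partners of $V_i$ correspond to distinct parts of $\Psi$ meeting $N_G[u]$, and there can be at most $|N_G[u]|$ such parts (each vertex of $N_G[u]$ lies in exactly one part). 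Therefore
\[
\deg_{\CG(G,\Psi)}(V_i) \;\le\; |N_G[u]| \;=\; \deg_G(u)+1 \;\le\; \Delta(G)+1,
\]
as desired.

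There is no real obstacle here; the only subtlety is remembering to treat the ``dominating singleton'' case separately, since for such a $V_i$ the witness vertex $u$ does not exist. Everything else follows immediately from the disjointness of the parts of $\Psi$ and the definition of a coalition.
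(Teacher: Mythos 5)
Your proof is correct. The paper cites this proposition from the literature without reproducing a proof, but your argument (pick a vertex $u$ not dominated by $V_i$ and observe that every coalition partner must meet $N_G[u]$, of which there are at most $\Delta(G)+1$ disjoint parts) is exactly the standard one and mirrors the technique the paper itself uses to prove the companion vertex-cover bound $\beta(\CG(G,\Psi)) \le \deg_G(v)+1$; your separate handling of the dominating-singleton case is a correct and necessary detail.
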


By Proposition~\ref{prop:Delta}, if $\Psi$ is a $c$-partition of a graph $G$ and $S\in \Psi$, then $S$ is in at most $\Delta(G)+1$ coalitions in $\Psi$. The following result is stated without proof in~\cite{coal2}. For completeness we present a proof of this elementary property of a $c$-partition in a graph.

\begin{proposition}{\rm (\cite{coal2})}
\label{p:path}
If $\Psi$ is a $c$-partition of a graph $G$ and $v$ is an arbitrary vertex of $G$, then $\beta(\CG(G,\Psi)) \le \deg_G(v) + 1$.
\end{proposition}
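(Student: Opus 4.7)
The plan is to exhibit an explicit vertex cover of $\CG(G,\Psi)$ of size at most $\deg_G(v)+1$, rather than argue via independence number or matchings. Let $v$ be the fixed vertex of $G$, and set $N = N_G[v]$, so $|N| = \deg_G(v)+1$. Define
\[
\cS \;=\; \{\, V_i \in \Psi : V_i \cap N \neq \emptyset\,\}.
\]
Since the sets of $\Psi$ partition $V(G)$ and each vertex of $N$ lies in exactly one block of $\Psi$, we have $|\cS| \le |N| = \deg_G(v)+1$. So it suffices to show that $\cS$ is a vertex cover of $\CG(G,\Psi)$.

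Let $V_jV_k$ be an arbitrary edge of $\CG(G,\Psi)$. By the definition of the coalition graph, $V_j$ and $V_k$ form a coalition in $G$, and therefore $V_j \cup V_k$ is a dominating set of $G$. In particular, $V_j \cup V_k$ dominates $v$, so $(V_j \cup V_k) \cap N_G[v] \neq \emptyset$; that is, $(V_j \cup V_k) \cap N \neq \emptyset$. Hence at least one of $V_j, V_k$ belongs to $\cS$, showing that the edge $V_jV_k$ is covered by $\cS$. This is true for every edge of $\CG(G,\Psi)$, so $\cS$ is indeed a vertex cover, which yields $\beta(\CG(G,\Psi)) \le |\cS| \le \deg_G(v)+1$, as claimed.

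There is no real obstacle here: the argument is essentially a one-line observation that every dominating set must hit the closed neighbourhood of every vertex, translated into the language of coalitions. The only minor point worth noting in the write-up is that the blocks of $\Psi$ partition $V(G)$, which is what guarantees $|\cS| \le |N|$ (without this, a single vertex of $N$ could contribute to several blocks and blow up the bound).
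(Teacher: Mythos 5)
Your proof is correct and is essentially identical to the paper's: both take the blocks of $\Psi$ meeting $N_G[v]$ as an explicit vertex cover, bound its size by $|N_G[v]|$ using the partition property, and observe that any coalition's union dominates $v$ and hence meets $N_G[v]$. Nothing further to add.
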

\begin{proof} Let $G$ with a graph with $c$-partition $\Psi = \{V_1, V_2, \ldots, V_k\}$, and let $v \in V(G)$. Let $\Psi'$ be a subset of $\Psi$, such that $V_i \in \Psi'$ if and only if $V_i$ contains a vertex of $N[v]$.  Hence, $|\Psi'| \le |N[v]| = \deg_G(v) + 1$.   Now consider an arbitrary edge $V_iV_j$ in $\CG(G,\Psi)$. Since the set $V_i \cup V_j$ is a dominating set of $G$, at least one of $V_i$ and $V_j$ contains a vertex from $N[v]$ in $G$.  Thus, at least one of $V_i$ and $V_j$ is in $\Psi'$. Hence, $\Psi'$ is a vertex cover of $\CG(G,\Psi)$, and so $\beta(\CG(G,\Psi)) \le |\Psi'| \le \deg_G(v) + 1$.~\QED
\end{proof}

\medskip
The following upper bounds on the coalition number of a graph are established in~\cite{coal2}.

\begin{theorem}{\rm (\cite{coal2})}
\label{thm:bound1}
If $G$ is a graph with $\delta = \delta(G)$ and $\Delta = \Delta(G)$, then the following hold. \\ [-24pt]
\begin{enumerate}
\item[{\rm (a)}] $C(G) \le \frac{1}{4}( \Delta + 3)^2$. \1
\item[{\rm (b)}] If $\delta < \frac{1}{2}\Delta$, then $C(G) \le (\delta + 1)(\Delta - \delta + 2)$.
\end{enumerate}
\end{theorem}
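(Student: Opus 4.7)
My plan is to fix a $\mathcal{C}$-partition $\Psi = \{V_1, \ldots, V_k\}$ of $G$ with $k = \mathcal{C}(G)$ and consider the coalition graph $H = \CG(G, \Psi)$, which satisfies $\Delta(H) \le \Delta + 1$ by Proposition~\ref{prop:Delta}. For each $v \in V(G)$, define $A_v = \{V_i \in \Psi : V_i \cap N[v] \neq \emptyset\}$ and $B_v = \Psi \setminus A_v$. As in the proof of Proposition~\ref{p:path}, $A_v$ is a vertex cover of $H$ with $|A_v| \le \deg_G(v) + 1$; every $V_j \in B_v$ lies inside $V(G) \setminus N[v]$, fails to dominate $v$, and therefore must have a coalition partner in $A_v$. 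Thus $k = |A_v| + |B_v|$, and the task reduces to bounding $|B_v|$ in terms of $|A_v|$ using the structure of $H$.

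For part (b), I would pick $v$ with $\deg_G(v) = \delta$, so that $|A_v| \le \delta + 1$, and bound $|B_v|$ by counting the edges of $H$ from $A_v$ to $B_v$. The crude estimate $\deg_H(V_i) \le \Delta + 1$ applied to each $V_i \in A_v$ only yields $k \le (\delta + 1)(\Delta + 2)$, which is not tight. To sharpen it, note that any $V_i \in A_v$ admitting a partner in $B_v$ must itself dominate $N[v]$ (because no $B_v$-set can contribute a vertex of $N[v]$), and then inspect the partners of such a $V_i$ inside $B_v$: they are pairwise disjoint subsets of $V(G) \setminus N[v]$, each containing a vertex of $N[w] \setminus N[v]$ for a fixed undominated $w \in V(G) \setminus (N[v] \cup N[V_i])$. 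The hypothesis $\delta < \tfrac{1}{2}\Delta$ is then used to select $w$ so that $|N[w] \setminus N[v]| \le \Delta - \delta + 1$, giving $|B_v| \le (\delta + 1)(\Delta - \delta + 1)$ and hence $k \le (\delta + 1)(\Delta - \delta + 2)$.

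For part (a) I would split on $\delta$. When $\delta \le \lfloor (\Delta-1)/2 \rfloor$, part (b) is available, and since the factors $(\delta + 1)$ and $(\Delta - \delta + 2)$ sum to $\Delta + 3$, AM--GM delivers $k \le \tfrac{1}{4}(\Delta + 3)^2$. When $\delta$ is larger the bound $(\delta + 1)(\Delta + 2)$ is too weak, so I would rerun the edge count with a different twist: when $|A_v|$ is large, many of the $H$-edges incident to $A_v$-sets are forced to lie \emph{inside} $A_v$ (because there are comparatively few $B_v$-sets to absorb them), yielding an inequality of the form $k \le |A_v|(\Delta + 3 - |A_v|)$; the right-hand side is maximised at $|A_v| = \lceil (\Delta + 3)/2 \rceil$ with value at most $\tfrac{1}{4}(\Delta + 3)^2$. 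The main technical obstacle, both in part (b) and in the large-$\delta$ regime of (a), is establishing these ``absorption'' refinements: as the small example $P_4$ with $v$ its internal vertex shows, $A_v$ need not be a clique in $H$, so the tightening must come from the finer constraint that each partner of $V_i$ in $B_v$ uses up a distinct vertex of $N[w] \setminus N[v]$ rather than merely of $N[w]$.
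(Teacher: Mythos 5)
The paper states Theorem~\ref{thm:bound1} without proof (it is imported from~\cite{coal2}), so I can only judge your argument on its own terms, and as it stands it does not close. The framework is sound: the split of $\Psi$ into $A_v$ and $B_v$, the fact that every set of $B_v$ fails to dominate $v$ and hence must have all its coalition partners in $A_v$, and the bound that the number of $B_v$-partners of a fixed $V_i\in A_v$ is at most $|N[w]\setminus N[v]|$ for any vertex $w$ not dominated by $V_i$ (the partners being pairwise disjoint, disjoint from $N[v]$, and each forced to dominate $w$) --- all of this is correct and yields $\mathcal{C}(G)\le(\delta+1)(\Delta+2)$. Likewise the reduction of (a) to (b) by AM--GM in the range $\delta<\tfrac{1}{2}\Delta$ is fine. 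But the entire content of part (b) beyond the crude bound is the claim that $w$ can be \emph{chosen} with $|N[w]\setminus N[v]|\le\Delta-\delta+1$, and you offer no argument for it: the hypothesis $\delta<\tfrac{1}{2}\Delta$ is invoked by name but never actually enters the reasoning. Nor is the claim obviously available --- if $V_i$ dominates $N[v]$ and everything near it, every undominated vertex $w$ may satisfy $N[w]\cap N[v]=\emptyset$ and $\deg_G(w)=\Delta$, in which case $|N[w]\setminus N[v]|=\Delta+1$ and you recover only the trivial estimate. There is also a local error: a set $V_j\in B_v$, though it contains no vertex of $N[v]$, can still dominate vertices of $N(v)$ (through vertices at distance two from $v$), so a partner $V_i\in A_v$ of such a $V_j$ is only forced to dominate $v$ itself, not all of $N[v]$; your parenthetical justification conflates ``contains no vertex of $N[v]$'' with ``dominates no vertex of $N[v]$.''

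Part (a) in the complementary regime $\delta\ge\tfrac{1}{2}\Delta$ has the same character: the inequality $\mathcal{C}(G)\le|A_v|\,(\Delta+3-|A_v|)$ would indeed finish the proof, but the ``absorption'' mechanism meant to produce it is described rather than proved. One would need to show that each $V_i\in A_v$ possessing a partner in $B_v$ has at least $|A_v|-1$ of its at most $\Delta+1$ partners inside $A_v$, or equivalently that its witness $w$ satisfies $|N[w]\cap N[v]|\ge|A_v|-1$, and --- as your own $P_4$ example shows --- nothing in the setup forces this. So both parts reduce correctly to a sharpened per-set count of $B_v$-partners, but that sharpening, which is precisely where the theorem lives, is asserted rather than established.
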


As a consequence of Corollary~\ref{cor:path} and Theorem~\ref{thm:bound1}(b), we have the following result.

\begin{corollary}{\rm (\cite{coal0,coal2})}
\label{cor:bound1}
If $G$ is a graph with $\delta(G) = 1$, then $C(G) \le 2(\Delta(G) + 1)$.
\end{corollary}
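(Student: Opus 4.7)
The natural plan is to split into two cases based on $\Delta := \Delta(G)$.

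In the first case, when $\Delta \ge 3$, the hypothesis $\delta(G) < \tfrac{1}{2}\Delta(G)$ of Theorem~\ref{thm:bound1}(b) is satisfied since $\delta(G) = 1 < \tfrac{3}{2} \le \tfrac{\Delta}{2}$. Substituting $\delta = 1$ into the conclusion of that theorem then gives
\[
\mathcal{C}(G) \le (\delta + 1)(\Delta - \delta + 2) = 2(\Delta + 1),
\]
which is exactly the claimed bound.

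In the remaining case $\Delta \le 2$, the assumption $\delta(G) = 1$ forces $G$ to be isolate-free of maximum degree at most $2$, so every component of $G$ is a $K_2$, a path on at least $3$ vertices, or a cycle. If $G$ itself is a single path or cycle, Corollary~\ref{cor:path} gives $\mathcal{C}(G) \le 6$; when $\Delta = 2$ this is exactly $2(\Delta + 1)$, and when $\Delta = 1$ we have $G = K_2$ and trivially $\mathcal{C}(G) = 2 \le 4$.

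The principal obstacle I foresee is extending the bound of Corollary~\ref{cor:path} from a single path or cycle to a disjoint union of such components, which is permitted under the hypothesis $\Delta(G) \le 2$. To close this gap I would invoke Propositions~\ref{prop:Delta} and~\ref{p:path} simultaneously: choosing $v$ to be a leaf of $G$, Proposition~\ref{p:path} forces $\beta(\CG(G,\Psi)) \le 2$ for every coalition partition $\Psi$, while Proposition~\ref{prop:Delta} forces $\Delta(\CG(G,\Psi)) \le \Delta + 1$. When the two cover-vertices of $\CG(G,\Psi)$ are themselves adjacent, their closed neighborhoods account for every set in $\Psi$ and immediately yield $|\Psi| \le 2(\Delta + 1)$; the subcase where these two cover-vertices are non-adjacent requires a finer structural analysis of $\CG(G,\Psi)$, which I expect to be the trickiest step to execute cleanly.
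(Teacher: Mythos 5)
Your two main cases are precisely the derivation the paper intends: for $\Delta(G)\ge 3$ the hypothesis $\delta(G)<\tfrac12\Delta(G)$ of Theorem~\ref{thm:bound1}(b) is met and substituting $\delta=1$ yields $2(\Delta(G)+1)$, while for $\Delta(G)\le 2$ the paper simply invokes Corollary~\ref{cor:path}; the paper records no further argument. Where you go beyond the paper is in observing that Corollary~\ref{cor:path} is stated only for a single path or cycle, whereas $\delta(G)=1$ and $\Delta(G)\le 2$ also permits disjoint unions of paths and cycles --- a legitimate point that the paper glosses over. However, the patch you sketch does not yet close that gap. With a vertex cover $\{A,B\}$ of $\CG(G,\Psi)$ (from Proposition~\ref{p:path} applied to a leaf) and $\Delta(\CG(G,\Psi))\le \Delta(G)+1$ (from Proposition~\ref{prop:Delta}), the adjacent case does give $|N[A]\cup N[B]|\le \deg(A)+\deg(B)\le 2(\Delta(G)+1)$ as you claim, but in the non-adjacent case the same count only gives $\deg(A)+\deg(B)+2\le 2(\Delta(G)+2)$, i.e.\ $8$ rather than $6$ when $\Delta(G)=2$, so the bound is not reached by this route; one must also dispose of possible isolated vertices of the coalition graph (these correspond to full vertices of $G$, and are easily excluded here since a disconnected graph with $\delta(G)=1$ has none, but that step should be said). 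So your proposal coincides with the paper's approach on the cases the paper actually covers and is more scrupulous about the hypothesis of Corollary~\ref{cor:path}, but the non-adjacent-cover subcase you flag as ``the trickiest step'' is genuinely unresolved in your write-up; finishing it requires either a direct verification that $\mathcal{C}(G)\le 6$ for disjoint unions of paths and cycles with a path component, or a sharper structural argument about coalition graphs with vertex cover number two.
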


\section{Trees with $\mathcal{C}(T)=n$}

In this section, we characterize trees $T$ of order~$n$ satisfying $C(T) = n$. First we characterize graphs $G$ of order $n$ and $\delta(G)=0$ such that $\mathcal{C}(G)=n$.

\begin{theorem}
\label{thmf0}
If $G$ is a graph of order~$n$ with $\delta(G) = 0$, then $\mathcal{C}(G) = n$ if and only if $G \cong  K_1 \cup K_{n-1}$.
\end{theorem}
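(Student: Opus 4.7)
The plan is to handle the two directions separately, with the forward direction being the substantive one. For the backward direction ($G \cong K_1 \cup K_{n-1} \Rightarrow \mathcal{C}(G)=n$), I would simply exhibit the singleton partition as a coalition partition. Writing $v$ for the isolated vertex and $V(K_{n-1}) = \{u_1,\ldots,u_{n-1}\}$, for $n \ge 2$ each $\{u_i\}$ dominates all of the clique but not $v$, while $\{v\}$ dominates only $v$, and $\{u_i\} \cup \{v\} = V(G)$ is trivially dominating. Hence $\{v\}$ and every $\{u_i\}$ are mutual coalition partners, giving $\mathcal{C}(G) \ge n$; the reverse inequality is immediate. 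The case $n=1$ is trivial.

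For the forward direction, suppose $\mathcal{C}(G) = n$. The only $c$-partition of cardinality $n$ is the singleton partition $\Psi = \{\{x\} : x \in V(G)\}$, so $\Psi$ must itself be a $c$-partition. Since $\delta(G) = 0$, fix an isolated vertex $v$; for $n = 1$ there is nothing to show, so assume $n \ge 2$. The central observation is that since $v$ has no neighbors, the only vertex whose closed neighborhood contains $v$ is $v$ itself. Consequently, for any $u \neq v$, the singleton $\{u\}$ cannot dominate $G$ (it misses $v$), so in $\Psi$ it must form a coalition with some $\{w\}$ where $\{u,w\}$ dominates $G$; in particular $\{u,w\}$ must dominate $v$, which forces $w = v$.

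Thus every singleton $\{u\}$ with $u \neq v$ has $\{v\}$ as its unique possible coalition partner. For $\{u,v\}$ to be a dominating set, $u$ must dominate $V(G) \setminus \{v\}$, i.e., $u$ is adjacent to every vertex in $V(G) \setminus \{v,u\}$. Applying this to each $u \neq v$ shows that the induced subgraph $G[V(G) \setminus \{v\}]$ is the complete graph $K_{n-1}$. Finally, if some $v' \neq v$ were also isolated, then choosing $u = v'$ in the above argument would force $v'$ to be adjacent to all vertices of $V(G) \setminus \{v, v'\}$, contradicting that $v'$ is isolated whenever $n \ge 3$; the case $n = 2$ gives $G = \overline{K_2} = K_1 \cup K_1$, which fits the claimed form. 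Combining these facts yields $G \cong K_1 \cup K_{n-1}$.

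I do not anticipate a major obstacle here; the proof essentially turns on the single structural fact that an isolated vertex has only itself in its closed neighborhood, which immediately pins down the coalition structure of the singleton partition. The mild care needed is in the small cases ($n = 1,2$) and in noting that $\mathcal{C}(G) = n$ forces consideration of the singleton partition rather than an arbitrary $c$-partition of size $n$.
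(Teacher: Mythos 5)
Your proposal is correct and follows essentially the same route as the paper: both reduce to the singleton partition, observe that an isolated vertex can only be dominated by itself so every other singleton's unique possible coalition partner is the isolated vertex, and conclude that the remaining $n-1$ vertices form a clique. The extra checks on $n=1,2$ and on a second isolated vertex are fine but not needed beyond what the clique conclusion already gives.
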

\begin{proof}
Clearly, $\mathcal{C}(K_1\cup K_{n-1}) = n$. Conversely, suppose that $G$ is a graph of order $n$ with $\delta(G) = 0$ and $\mathcal{C}(G) = n$. Thus, $\{\{v\} \colon v\in V(G)\}$ is a $\mathcal{C}$-partition of $G$. Let $x$ be an isolated vertex of $G$. If $v$ and $v'$ are two distinct vertices in $V(G) \backslash \{x\}$, then since the set $\{v,v'\}$ does not dominate the graph $G$, the pair $\{v\}$  and $\{v'\}$ do not form a coalition. Therefore, $\{v\}$ and $\{x\}$ form a coalition for each $v \in V (G)\backslash\{x\}$, implying that $V(G) \backslash\{x\}$ is a clique in $G$ of cardinality $n - 1$. Thus, $G \cong  K_1 \cup K_{n-1}$.~\QED
\end{proof}

We next define a family $\mathcal{F}_1$ of graphs.

\begin{definition}{\rm (The family $\mathcal{F}_1$)}
\label{defn1}
{\rm Let $G$ be an isolate-free graph constructed as follows. Let $V(G) = \{x,y,w\} \cup P \cup Q$, where $P \cap Q \cap  \{x,y,w\} = \emptyset$ and $|P \cup Q| \ge 1$. Further, if $Q \ne \emptyset$, then $|Q| \ge 2$. Let the edge set $E(G)$ be defined as follows. Let $N_G(x) = \{y\}$ and let $N_G(w) = P \cup Q$. Join each vertex $p \in P$ to every vertex in $(P \cup Q) \setminus \{p\}$. If $Q \ne \emptyset$, then add all edges from the vertex $y$ to every vertex in $Q$, and so $Q \cup \{x\} \subseteq N_G(y)$. Further if $Q \ne \emptyset$, then add edges between vertices in $Q$, including the possibility of adding no edge (in which case $Q$ is an independent set), in such a way that $G[Q]$ does not contain a full vertex. Thus, every vertex $q \in Q$ is not adjacent in $G$ to at least one vertex in $Q \setminus \{q\}$. Finally, add any number of edges between the vertex~$y$ and vertices in $P$, including the possibility of adding no edge between~$y$ and vertices in $P$. }
\end{definition}

We note that if $G \in \mathcal{F}_1$ is a disconnected graph of order~$n$, then $G \cong K_2 \cup K_{n-2}$. We are now in a position to characterize graphs $G$ of order~$n$ with $\delta(G)=1$ and with no full vertex that satisfy $\mathcal{C}(G)=n$.

\begin{theorem}
\label{thmd1}
If $G$ is a graph of order~$n$ with $\delta(G)=1$ and with no full vertex, then $\mathcal{C}(G)=n$ if and only if $G \in {\cal F}_1$.
\end{theorem}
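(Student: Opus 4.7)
Assume $G\in\mathcal{F}_1$. The plan is to show that the singleton partition $\{\{v\}:v\in V(G)\}$ is a coalition partition. First check that no vertex of $G$ is full: the construction gives $\deg_G(x)=1$, $\deg_G(w)=|P\cup Q|=n-3$, $\deg_G(y)\le n-2$ (since $wy\notin E(G)$), and every $p\in P$ or $q\in Q$ has degree at most $n-2$. Hence no singleton is itself a dominating set, and it only remains to exhibit a coalition partner for each one. The natural pairings are: $\{x\}$ and $\{y\}$ each pair with $\{w\}$, because $N[x]\cup N[w]=\{x,y\}\cup\{w\}\cup(P\cup Q)=V$ and $N[y]\cup N[w]\supseteq\{x,y,w\}\cup(P\cup Q)=V$; $\{w\}$ reciprocally pairs with $\{y\}$; each $\{p\}$ with $p\in P$ pairs with $\{x\}$, because $N[p]\supseteq\{w\}\cup P\cup Q$ forces $N[p]\cup N[x]=V$; and each $\{q\}$ with $q\in Q$ pairs with $\{y\}$, because $N[q]\supseteq\{q,w,y\}\cup P$ together with $N[y]\supseteq\{x,y\}\cup Q$ covers $V$.

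\textbf{Reverse direction; locating $w$.} Conversely, assume $\mathcal{C}(G)=n$, so every singleton $\{v\}$ is non-dominating (since $v$ is not full) and must have a coalition partner. Fix a leaf $x$ with neighbour $y$ and let $\{w\}$ be any coalition partner of $\{x\}$. From $N[x]\cup N[w]=V$ and $N[x]=\{x,y\}$ one obtains $V\setminus\{x,y\}\subseteq N[w]$, which forces $w\ne x$, $w\ne y$ (otherwise $y$ would be full), and $w$ adjacent to every vertex of $V\setminus\{x,y,w\}$. The crucial step is to arrange $w\not\sim y$, so as to match the requirement $N_G(w)=P\cup Q$ in the definition of $\mathcal{F}_1$. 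I would argue this by contradiction: suppose every partner $w'$ of $\{x\}$ satisfies $w'\sim y$. Since $y$ is not full, $R:=V\setminus N[y]$ is non-empty and disjoint from $\{x,y\}$. For each $r\in R$ one has $r\not\sim x$ (because $N(x)=\{y\}$ and $r\ne y$), so the partner $u_r$ of $\{r\}$ must itself dominate $x$, forcing $u_r\in N[x]=\{x,y\}$. If $u_r=x$, then $V\setminus\{x,y\}\subseteq N[r]$, so $r$ is itself a partner of $\{x\}$ with $r\not\sim y$, contradicting the assumption. If $u_r=y$, then $R\subseteq N[r]$; applying the identical dichotomy to each $t\in T:=N(y)\setminus\{x\}$ yields $t\sim r$ in either sub-case, so $r$ is adjacent to $(R\setminus\{r\})\cup T=V\setminus\{x,y,r\}$. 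Hence again $r$ is a partner of $\{x\}$ with $r\not\sim y$, a contradiction.

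\textbf{Matching $\mathcal{F}_1$.} With $w$ so chosen, set $P:=\{v\in V\setminus\{x,y,w\}:v\text{ is adjacent to every other vertex of }V\setminus\{x,y,w\}\}$ and $Q:=(V\setminus\{x,y,w\})\setminus P$. The same partner analysis as above shows that every $r\in R\setminus\{w\}$ is adjacent to $V\setminus\{x,y,r\}$, hence to $V\setminus\{x,y,w,r\}$, giving $R\setminus\{w\}\subseteq P$. Since $R\cup T=V\setminus\{x,y\}$, this yields $Q\subseteq T\subseteq N(y)$, so $Q\subseteq N(y)$. If some $q\in Q$ were a full vertex of $G[Q]$, its witness non-neighbour in $V\setminus\{x,y,w,q\}$ (which exists since $q\notin P$) would have to lie in $P$; but every vertex of $P$ is adjacent to $q$, a contradiction. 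The same observation rules out $|Q|=1$. All remaining defining properties of $\mathcal{F}_1$ (namely the disjoint decomposition $V=\{x,y,w\}\cup P\cup Q$, the identities $N(x)=\{y\}$ and $N(w)=P\cup Q$, the completeness of edges from each $p\in P$ to $(P\cup Q)\setminus\{p\}$, and the optionality of $y$-to-$P$ edges) then hold by construction.

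\textbf{Main obstacle.} The crux of the proof is the dichotomy argument that secures a partner $w$ with $w\not\sim y$: without it, the vertices of $R\setminus\{w\}$ could not be consistently assigned to $P$ (where the required internal adjacencies might fail) or to $Q$ (where adjacency to $y$ is required), and the classification of $V\setminus\{x,y,w\}$ would collapse. Once this distinguished vertex $w$ is pinned down, all remaining verifications are routine unwindings of the coalition-partner condition for each type of vertex.
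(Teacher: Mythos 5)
Your proof is correct and follows essentially the same strategy as the paper's: fix a leaf $x$ with neighbour $y$, note that $\{x\}$ and $\{y\}$ cannot be coalition partners, observe that every singleton outside $\{x,y\}$ can only form a coalition with $\{x\}$ or $\{y\}$ (since $x$ must be dominated), and read off the structure of $\mathcal{F}_1$ from the resulting $P$/$Q$ split of $V\setminus\{x,y,w\}$. The only real difference is in locating $w$: the paper simply takes any vertex $w\notin N[y]$ (which exists because $y$ is not full) and then shows $V=\{x,y\}\cup N[w]$ via a no-partner contradiction, which makes your dichotomy argument for producing a partner of $\{x\}$ non-adjacent to $y$ unnecessary, although that argument is sound (and in fact yields unconditionally that every vertex of $V\setminus N[y]$ dominates $V\setminus\{x,y\}$).
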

\begin{proof}
Suppose firstly that $G \in {\cal F}_1$. Adopting our notation in Definition~\ref{defn1}, we have that if $p \in P \cup \{w\}$, then $\{x,p\}$ is a dominating set of $G$, and if $q \in Q \cup \{w\}$, then $\{y,q\}$ is a dominating set of $G$, implying that every subset $\{v\}$, where $v \in N[w]$, forms a coalition with $\{x\}$ or $\{y\}$. Therefore, $\mathcal{C}(G)=n$.

Conversely, let $G$ be a graph of order~$n$ with $\delta(G)=1$ and with no full vertex, and suppose that $\mathcal{C}(G)=n$. Let $x$ be a leaf of $G$ and let $y$ be the unique neighbor of $x$ in $G$. Let $\Psi$ be a $\mathcal{C}$-partition of $G$, and so $|\Psi| = n$ and $\Psi$ is a $c$-partition of $G$. Thus, each member of $\Psi$ is a singleton set (of cardinality~$1$), that is, if $v \in V(G)$, then $\{v\} \in \Psi$. We show that $G\in {\cal F}_1$. We know that $\{x\} \in \Psi$ and $\{y\}\in \Psi$. Suppose that $\{x\}$ and $\{y\}$ form a coalition in $G$. Since $x$ is a leaf and $N(x)=\{y\}$, all members  of $V \setminus \{x,y\}$ must be adjacent to $y$. Thus, the degree of~$y$ is $n-1$, and so $y$ is a full vertex of $G$, a contradiction.  Hence, $\{x\}$ and $\{y\}$ do not form a coalition. Let $w$ be a vertex not dominated by~$y$, and so $w \ne y$ and $w$ is not adjacent to~$y$. Let $N(w)=\{w_1,\dots,w_k\}$.

We show that $V = \{x,y\} \cup N[w]$. Suppose, to the contrary, that there exists a vertex $v \in V \backslash (\{x,y\} \cup N[w])$. If $z$ is an arbitrary vertex distinct from $x$ and $y$, then the pair $\{v\}$ and $\{z\}$ do not form a coalition since the vertex $x$ is not dominated by the set $\{v,z\}$. Moreover, $\{v\}$ does not form a coalition with $\{x\}$ or $\{y\}$ since the vertex $w$ is not dominated by the set $\{v,x,y\}$. Hence, $\{v\}$ has no coalition partner, a contradiction. Therefore, $V = \{x,y\} \cup N[w]$. Recall that $w \ne y$ and $w$ is not adjacent to~$y$. Since~$G$ has no full vertices, every set of $\Psi$ must be a coalition partner of some other set of $\Psi$. Let
\[
{\cal W} = \{S\in \Psi \colon N[w] \cap S\ne \emptyset\}.
\]

Since~$\mathcal{C}(G)=n$, we note that ${\cal W} = \left\{\{w\},\{w_1\},\ldots,\{w_k\}\right\}$. Necessarily, each member of $\cal W$ forms a coalition with~$\{x\}$ or~$\{y\}$, and there is no coalition between the members of $\cal W$. In particular, $\{w\}$ is a coalition partner of each of~$\{x\}$ and~$\{y\}$. Let ${\cal W}_P \subseteq {\cal W} \setminus \left\{\{w\}\right\}$ be the collection of all sets  of $\Psi$ that form a coalition with $\{x\}$, and let $P = \{ p \in N(w) \colon \{p\} \in {\cal W}_P \}$. Moreover, let ${\cal W}_Q = \left({\cal W} \setminus \left\{\{w\}\right\}\right) \setminus {\cal W}_P$ and let $Q = \{ q \in N(w) \colon \{q\} \in {\cal W}_Q \}$. We note that $\{P,Q\}$ is a weak partition of $N(w)$, that is, $N(w) = P \cup Q$, where $P \cap Q = \emptyset$ and where in a weak partition we allow some of the sets in the partition to be empty (in our case, possibly $P = \emptyset$ or $Q = \emptyset$). No set in ${\cal W}_Q$ forms a coalition with $\{x\}$, and therefore every member of ${\cal W}_Q$ forms a coalition with $\{y\}$.

Let $p \in P$. Thus, $\{p\} \in {\cal W}_P$ forms a coalition with $\{x\}$, and so the set $\{x,p\}$ is a dominating set of $G$, implying that the vertex $p$ dominates the set $N[w]$, and so $p$ is adjacent to every vertex in $N(w) \backslash \{p\}$. Let $q \in Q$. Thus, $\{q\} \in {\cal W}_Q$ and the set $\{x,q\}$ is not a dominating set of $G$, implying that the vertex~$q$ is not adjacent to at least one vertex in $N(w) \setminus \{q\}$. As observed earlier, every vertex in $P$ is adjacent to every vertex in $N(w) = P \cup Q$, implying that all non-neighbors of $q$ in $N(w)$ belong to the set $Q$. Further, since $\{q,y\}$ is a dominating set of $G$, the vertex~$y$ is adjacent to every vertex in $Q$ that is not adjacent to~$q$. In particular, if $q'$ is a vertex in $Q \setminus \{q\}$ that is not adjacent to $q$, then since $\{y,q'\}$ is a dominating set of $G$, the vertex $y$ is adjacent to the vertex~$q$. Therefore, the vertex $y$ dominates the set $Q$. According to the definition of family~${\cal F}_1$, we infer that $G\in {\cal F}_1$.~\QED
\end{proof}

\medskip
Next we characterize graphs $G$ of order~$n$ with $\delta(G)=1$ and exactly one full vertex that satisfy $\mathcal{C}(G)=n$.

%$K_1 \cup K_{n-1}$ by adding an edge

\begin{theorem}
\label{col1}
If $G$ is a graph of order~$n \ge 3$ with $\delta(G)=1$ and with exactly one full vertex, then $\mathcal{C}(G)=n$ if and only if $G$ is obtained from the graph $K_1 \cup K_{n-1}$ by adding an edge joining the isolated vertex to an arbitrary vertex of the complete graph $K_{n-1}$.
\end{theorem}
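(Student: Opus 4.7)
The plan is to prove each direction by arguing directly with the singleton partition $\Psi=\{\{v\}:v\in V(G)\}$, which must be a $\mathcal{C}$-partition whenever $\mathcal{C}(G)=n$.

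For the sufficiency direction, take $G$ to be the described graph: vertex set $V(K_{n-1})\cup\{x\}$ with $x$ adjacent only to some prescribed $u\in V(K_{n-1})$. First verify the hypotheses: $x$ is a leaf, so $\delta(G)=1$; the vertex $u$ has degree $n-1$, while every vertex of $V(K_{n-1})\setminus\{u\}$ is not adjacent to $x$ and so has degree $n-2$, so $u$ is the unique full vertex. Then verify the singleton partition is a coalition partition: $\{u\}$ is itself a dominating set; for every $v\in V(K_{n-1})\setminus\{u\}$, the singleton $\{v\}$ forms a coalition with $\{x\}$ because $v$ dominates $V(K_{n-1})$ and $x$ dominates itself, while neither $\{v\}$ nor $\{x\}$ is dominating on its own. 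Hence $\mathcal{C}(G)=n$.

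For the necessity direction, assume $\mathcal{C}(G)=n$, so every singleton is a member of a $\mathcal{C}$-partition. Let $u$ be the unique full vertex, and let $x$ be a leaf with unique neighbor~$y$. Since $u$ is full, $u$ is adjacent to $x$, which forces $y=u$; thus every leaf has $u$ as its unique neighbor. Write $A=V(G)\setminus\{x,u\}$. The only dominating singleton is $\{u\}$ (because $u$ is the only full vertex), and no coalition partner of any set can be $\{u\}$ since coalition partners must both be non-dominating.

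The crux of the proof, which I regard as the main step, is the following observation: for every $v\in A$, the singleton $\{v\}$ must coalesce with some $\{v'\}$ where $v'\notin\{u,v\}$; I claim $v'=x$ is forced. Indeed, $x$'s only neighbor is $u$, so $x\in N[v]\cup N[v']$ would require $v\in\{x,u\}$ or $v'\in\{x,u\}$; since $v\in A$ and $v'\neq u$, the only possibility is $v'=x$. Hence $N[v]\cup N[x]=V$, which gives $N[v]\supseteq V\setminus\{x,u\}=A\setminus\{v\}$. Together with $u\in N[v]$ (from $u$ being full), this means $v$ is adjacent to every vertex in $(A\cup\{u\})\setminus\{v\}$. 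Since this holds for every $v\in A$, the set $A\cup\{u\}$ induces a clique of order $n-1$, while $x$ remains pendant to $u$. This is precisely the graph obtained from $K_1\cup K_{n-1}$ by joining the isolated vertex to a single vertex of $K_{n-1}$, completing the proof.
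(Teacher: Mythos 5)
Your proof is correct and follows essentially the same route as the paper's: direct verification of the singleton partition for sufficiency, and for necessity the observation that each vertex of $A=V(G)\setminus\{x,u\}$ can only coalesce with the leaf singleton $\{x\}$ (since $\{u\}$ is dominating and $x$ can only be dominated via $\{x,u\}$), which forces $A\cup\{u\}$ to be a clique. The only blemish is the harmless slip $V\setminus\{x,u\}=A$, not $A\setminus\{v\}$; the intended conclusion $N[v]\supseteq A\setminus\{v\}$ still follows.
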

\begin{proof}
For $n \ge 3$, let $G$ be obtained from $K_1 \cup K_{n-1}$ by adding an edge $xy$ where $x$ is the vertex in the copy of $K_1$ and $y$ is a vertex in the complete graph $K_{n-1}$. In the resulting graph $G$, the vertex $x$ has degree~$1$ and the vertex $y$ has degree~$n-1$. Thus, $y$ is a full vertex. If $w \in V(G) \setminus \{x,y\}$, then $\{w,x\}$ is a dominating set of $G$, and so $\{w\}$ forms a coalition with the set $\{x\}$, implying that $\mathcal{C}(G)=n$. This proves the sufficiency.

To prove the necessity, let $G$ be a graph of order~$n \ge 3$ with $\delta(G)=1$ and with exactly one full vertex and suppose that $\mathcal{C}(G)=n$. Let $\Psi$ be a $\mathcal{C}$-partition of $G$. Let $u$ and $v$ be vertices of $G$ where $\deg_G(u)=1$ and $\deg_G(v)=n-1$, and let $w$ be an arbitrary vertex in $V(G) \setminus \{u,v\}$. Since $v$ is a full vertex, the set $\{v\}$ does not form a coalition with any vertex of $G$. Hence since the set $\{w\}$ is not a dominating set of $G$, the only coalition partner of the set $\{w\}$ in $\Psi$ is the set $\{u\}$. This implies that the vertex $w$ is adjacent to every vertex in $G$ except for the vertex~$u$. This is true for all vertices $w \in V(G) \setminus \{u,v\}$, implying that the graph $G' = G - v \cong K_1 \cup K_{n-2}$. Rebuilding the graph $G$ from $G'$ by adding back the vertex~$y$ and all edges joining $y$ to the $n-1$ vertices in $V(G')$, the desired result follows.~\QED
\end{proof}

We are now in a position to characterize all trees $T$ of order~$n$ with $\mathcal{C}(T)=n$.

\begin{theorem}
\label{thmtree}
If $T$ is a tree of order~$n$, then $\mathcal{C}(T) = n$ if and only if $T$ is a path of order at most~$4$.
\end{theorem}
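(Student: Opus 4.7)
The sufficiency is immediate from Theorem~\ref{thm:path}(a), which gives $\mathcal{C}(P_n) = n$ whenever $n \le 4$, so the entire content lies in the necessity. My plan is to funnel the necessity through the characterizations in Theorems~\ref{thmd1} and~\ref{col1} and then inspect which of the resulting graphs can be trees.

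First I would handle $n \in \{1,2\}$ by direct inspection: $P_n$ is the only tree of that order. From now on assume $n \ge 3$. Since $T$ has a leaf, $\delta(T) = 1$, and since a tree of order at least $3$ has a full vertex if and only if $T \cong K_{1,n-1}$, the tree $T$ has at most one full vertex. This naturally splits the argument into two subcases matching the two characterizations.

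In the subcase when $T$ has exactly one full vertex, Theorem~\ref{col1} forces $T$ to be obtained from $K_1 \cup K_{n-1}$ by joining the isolated vertex to a vertex of the clique. For this graph to be acyclic one needs $n - 1 \le 2$, and combined with $n \ge 3$ this gives $n = 3$ and $T \cong P_3$. This subcase is therefore essentially a one-line check.

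The substantive subcase is when $T$ has no full vertex. Theorem~\ref{thmd1} then places $T$ in the family $\mathcal{F}_1$, and I would walk through Definition~\ref{defn1} to eliminate every configuration except one. The structural features I would exploit are that $P \cup \{w\}$ spans a clique in $G$, every $p \in P$ is adjacent to every vertex of $Q$, and both $w$ and $y$ are adjacent to every vertex of $Q$ (when $Q \ne \emptyset$). These let me locate a short cycle in each bad configuration: if $|P| \ge 2$ then $w, p_1, p_2$ form a triangle; if $|P| = 1$ and $|Q| \ge 2$ then $w, p, q$ form a triangle for any $q \in Q$; and if $P = \emptyset$ then $|Q| \ge 2$ is forced by the definition and $w q_1 y q_2$ is a $4$-cycle. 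The only surviving configuration is $|P|=1$ with $Q = \emptyset$, in which case connectivity of $T$ forces the optional edge $yp$ to be present (otherwise $T$ has two components), yielding the path $x$--$y$--$p$--$w \cong P_4$. Combined with the previous subcase, this shows that for $n \ge 3$ the only trees with $\mathcal{C}(T) = n$ are $P_3$ and $P_4$. The main obstacle is precisely this case analysis inside $\mathcal{F}_1$, but each subcase reduces to exhibiting a short cycle, so the overall argument remains elementary.
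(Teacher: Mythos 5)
Your proposal is correct and follows essentially the same route as the paper: sufficiency from Theorem~\ref{thm:path}(a), and necessity by splitting on the presence of a full vertex, invoking Theorem~\ref{col1} in the full-vertex case and Theorem~\ref{thmd1} in the other, then ruling out every member of $\mathcal{F}_1$ except $P_4$ by locating a triangle or a $4$-cycle. The only cosmetic differences are that you organize the $\mathcal{F}_1$ case analysis by $|P|$ rather than by $Q$ and make explicit the connectivity argument forcing the edge $yp$, which the paper leaves implicit.
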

\begin{proof}
By Theorem~\ref{thm:path}(a), if $T = P_n$ where $n \in [4]$, then $C(P_n) = n$. Conversely, let $T$ be a tree of order~$n$ satisfying $\mathcal{C}(T) = n$. The result is immediate if $n = 1$ or $n = 2$, and so we may assume that $n \ge 3$. If there is a full vertex in $T$, then $T \cong K_{1,n-1}$. In this case, by Theorem~\ref{col1} we must have that $n = 3$, yielding $T = P_3$, as desired. Hence we may assume that $T$ does not have a full vertex, implying that $n \ge 4$. In this case by Theorem~\ref{thmd1}, the tree $T$ belongs to the family~${\cal F}_1$. Adopting our earlier notation in Definition~\ref{defn1}, if $Q \ne \emptyset$, then $|Q| \ge 2$, and the graph $T[\{y,q_1,q_2,w\}]$ contains a $4$-cycle where $\{q_1,q_2\} \subseteq Q$, contradicting the fact that $T$ is a tree. Hence, $Q = \emptyset$. If $|P| \ge 2$, then the graph $T[P \cup \{w\}]$ contains a copy of $K_3$, once again contradicting the fact that $T$ is a tree. Hence, $|P| = 1$, implying that $T$ is the path $P_4$. Therefore, if $\mathcal{C}(T) = n$, then $T = P_n$ where $n \in [4]$.~\QED
\end{proof}

\section{Trees with $\mathcal{C}(T)=n-1$}

In this section, we characterize all trees $T$ with $\mathcal{C}(T) = n-1$. Recall that for $r, s \ge 1$, a double star $S(r,s)$ is a tree with exactly two (adjacent) vertices that are not leaves, with one of the vertices having $r$ leaf neighbors and the other $s$ leaf neighbors. We shall prove the following result.

\begin{theorem}
\label{thm:tree2}
If $T$ is a tree of order~$n$, then $\mathcal{C}(T) = n-1$ if and only if $T \in \{K_{1,3},P_5,P_6,S(2,1)\}$.
\end{theorem}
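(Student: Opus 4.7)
The plan is to analyze a $\mathcal{C}$-partition $\Psi$ of $T$ with $|\Psi|=n-1$. Since the parts of $\Psi$ are disjoint and union to $V(T)$, $\Psi$ must consist of exactly one $2$-element part $A=\{a,b\}$ together with $n-2$ singleton parts. Sufficiency is handled by exhibiting an explicit coalition partition of size $n-1$ for each candidate tree: for $P_5$ and $P_6$ this follows from Theorem~\ref{thm:path}(a), for $K_{1,3}$ the partition $\{\{c\},\{\ell_1\},\{\ell_2,\ell_3\}\}$ works (with $c$ the center and $\ell_i$ the three leaves), and for $S(2,1)$ the partition $\{\{u\},\{v\},\{v_1\},\{u_1,u_2\}\}$ works (with $u_1,u_2$ the two leaves sharing the support $u$, and $v_1$ the leaf with support $v$).

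For the necessity direction, I first consider the case that $T$ has a full vertex. Since $T$ is a tree this forces $T\cong K_{1,n-1}$; a direct argument shows that in any $c$-partition of $K_{1,n-1}$ the center is an isolated part of $\CG(K_{1,n-1},\Psi)$, and any two non-center parts form a coalition if and only if they partition the leaves, so $\mathcal{C}(K_{1,n-1})=3$ and hence $n=4$ and $T=K_{1,3}$. I then assume $T$ has no full vertex, so every part of $\Psi$ (including $A$) is non-dominating and must have a coalition partner in $\Psi$. The main structural constraint comes from the proof of Proposition~\ref{p:path}: for every leaf $\ell$ of $T$ with support $s$, the parts of $\Psi$ intersecting $N_T[\ell]=\{\ell,s\}$ form a vertex cover of $\CG(T,\Psi)$ of size at most~$2$.

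The core of the proof is a case analysis on the structure of $A$: (i)~$A=\{\ell,s\}$ for a leaf $\ell$ with support $s$; (ii)~$A$ contains a leaf but not its support; (iii)~$A$ contains no leaf. In case~(i), the vertex cover above is just $\{A\}$, so every singleton $\{v\}$ must pair with $A$, whence $\{\ell,s,v\}$ dominates $T$. Since $\ell$ is dominated by $s$, this reduces to $\{s,v\}$ dominating $T$ for every $v\notin A$. A short argument using triangle-freeness of $T$ and the no-full-vertex assumption gives $|V(T)\setminus N_T[s]|=1$, yielding a spider-like tree for which the coalition condition fails at the neighbor of $s$ distinct from $\ell$ once $n\ge 5$, so this case produces no tree. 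In case~(ii) the vertex cover is $\{A,\{s\}\}$, and partitioning the singletons by which of $A$ or $\{s\}$ they pair with, together with Corollary~\ref{cor:bound1} (which gives $\Delta(T)\ge (n-3)/2$) and tree-ness, bounds $n\le 6$; direct enumeration then yields $T\in\{P_5,P_6,S(2,1)\}$. In case~(iii), for any leaf $\ell$ both $\{\ell\}$ and $\{s\}$ are singletons in $\Psi$ forming a vertex cover of $\CG(T,\Psi)$; exploiting two distinct leaves with distinct supports then forces a contradiction.

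The main obstacle is case~(ii), because several sub-configurations of $A$ (two leaves sharing a common support, two leaves with distinct supports, or a leaf paired with a non-leaf non-support vertex) each lead to different candidate trees, and each sub-configuration must be ruled out carefully using Propositions~\ref{prop:Delta} and~\ref{p:path}, Corollary~\ref{cor:bound1}, and the absence of triangles and $4$-cycles in a tree, in order to bound $n$ and eliminate all candidates other than $P_5$, $P_6$, and $S(2,1)$.
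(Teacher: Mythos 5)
Your overall strategy --- isolate the unique $2$-element part $A$, observe via Proposition~\ref{p:path} that the at most two parts meeting $N[\ell]$ for a leaf $\ell$ form a vertex cover of $\CG(T,\Psi)$, and split cases according to how $A$ meets leaves and their supports --- is the same organizing principle as the paper's proof, which fixes a leaf $x$ with support $y$ and forces every other part to pair with $\{x\}$ or $\{y\}$. Your sufficiency partitions check out, and your case~(i) (where $A$ is a leaf together with its support, the paper's Claim~\ref{c:claim-1}) is sound. The genuine gap is in case~(ii), which you yourself identify as the heart of the proof but do not carry out, and the one concrete mechanism you cite for bounding the order there is invalid: Corollary~\ref{cor:bound1} gives $\Delta(T)\ge (n-3)/2$, but this inequality together with tree-ness does \emph{not} force $n\le 6$. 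The double star $S(r,1)$ has $\Delta = n-2$ and satisfies the bound for every $n$, as does any spider. What actually eliminates such trees is a partner-availability argument. For instance, in $S(r,1)$ with $r\ge 3$ (center $u$ with leaves $u_1,\dots,u_r$, adjacent to $v$ with leaf $v_1$), some $u_i$, say $u_1$, is a singleton part; any $Y\in\Psi$ with $\{u_1\}\cup Y$ dominating and $|Y|\le 2$ must contain $u$ (to dominate $u_2$ and $u_3$) and one of $v,v_1$, hence $Y\in\{\{u,v\},\{u,v_1\}\}$; but both of these sets are themselves dominating sets of cardinality~$2$ and so cannot be parts of a coalition partition. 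Thus $\{u_1\}$ has no partner. This style of argument, carried out configuration by configuration, is exactly the content of the paper's Claims~\ref{c:claim-2}--\ref{c:claim-5} (using that $N(w)$ is independent and that $y$ has at most one neighbor in $N(w)$ for a vertex $w$ not dominated by $y$), and none of it appears in your case~(ii).

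A second, smaller gap is in case~(iii): you assert that for any leaf $\ell$ with support $s$ both $\{\ell\}$ and $\{s\}$ are singleton parts, but the hypothesis that $A$ contains no leaf does not prevent $s\in A$; in that sub-case the vertex cover coming from $N[\ell]$ is $\{\{\ell\},A\}$ rather than $\{\{\ell\},\{s\}\}$, and your ``two leaves with distinct supports'' contradiction must be re-derived (note also that a non-star tree can have only two supports, so at most one leaf may have its support outside $A$). So the skeleton is reasonable and parallels the published proof, but as written the decisive cases rest on an assertion ($n\le 6$ from the degree bound) that is false on its own, and the substantive combinatorial work remains to be done.
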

\begin{proof}
Let $T$ be a tree of order~$n$. If $T \in \{K_{1,3},P_5,P_6,S(2,1)\}$, then it is straightforward to check that $\mathcal{C}(T) = n-1$. To prove the necessity, suppose that $\mathcal{C}(T) = n-1$. By Theorem~\ref{thmtree}, $\mathcal{C}(T) \le n-1$ if and only if $T = K_{1,3}$ or the order of $T$ is at least~$5$. If $T$ contains a full vertex, then $T$ is a star $K_{1,n-1}$ and $\mathcal{C}(T) = 3$. By supposition, $\mathcal{C}(T) = n-1$, implying that in this case, $T = K_{1,3}$ as desired. Hence, we may assume that $T$ does not contain a full vertex, implying that $n \ge 5$.

Let $\Psi$ be a $\mathcal{C}$-partition of $G$, and so $|\Psi| = n-1$ and $\Psi$ is a $c$-partition of $G$. Thus, each member of $\Psi$ is a singleton set of cardinality~$1$, except for one member in $\Psi$ of cardinality~$2$. Let $\Psi = \{V_1, \ldots, V_{n-1},U\}$ where $|V_i| = 1$ for $i \in [n-1]$ and $|U| = 2$. Let $x$ be a leaf of $T$ and let $y$ be the unique neighbor of $x$. We proceed further with the following series of structural properties of the tree $T$.

\begin{claim}
\label{c:claim-1}
$U \ne \{x,y\}$.
\end{claim}
\proof Suppose, to the contrary, that $U = \{x,y\}$. Let $v \in V(T) \setminus U$, and so $\{v\} \in \Psi$. If $\{v'\}$ is a coalition partner of $\{v\}$ for some $v' \in V(T) \setminus U$, then the vertex $x$ is not dominated by the set $\{v,v'\}$, a contradiction. Hence, the coalition partner of $\{v\}$ is the set $U$. In particular, $\{v,y\}$ is a dominating set for all $v \in V(T) \setminus U$. Let $A$ be the set of vertices in $V(T) \setminus U$ that are not adjacent to the vertex~$y$, and let $B$ be the set of vertices in $V(T) \setminus U$ that are dominated by~$y$. If $A = \emptyset$, then $y$ would be a full vertex, a contradiction. Hence, $A \ne \emptyset$. As observed earlier, if $a \in A$, then $\{a\}$ and $\{x,y\}$ form a coalition, implying that the vertex~$a$ is adjacent to every other vertex of $A$. Since this is true for all vertices $a \in A$, we infer that $A$ is a clique. If $B = \emptyset$, then $T$ would be disconnected, a contradiction. Hence, $B \ne \emptyset$. If $b \in B$, then $\{b\}$ and $\{x,y\}$ form a coalition, implying that the vertex~$b$ is adjacent to every vertex of $A$. Since this is true for all vertices $b \in B$, every vertex in $A$ is adjacent to every vertex in $B$. Hence, $\{a,x\}$ is a dominating set for every vertex $a \in A$, and $\{b,y\}$ is a dominating set for every vertex $b \in B$, implying that $\mathcal{C}(T) = n$, a contradiction.~\smallqed

\medskip
By Claim~\ref{c:claim-1}, $U \ne \{x,y\}$, and so $|U \cap \{x,y\}| \le 1$. Since $T$ has no full vertex, no vertex dominates $T$. Let $w$ be a vertex that is not dominated by $y$. Let $N(w) =\{w_1,\ldots,w_k\}$, and so $\deg_G(w) = k$ and $N(w) \cap \{x,y\} = \emptyset$. If two neighbors of $w$ are adjacent, then these two vertices together with $w$ induce a triangle in $T$, a contradiction. Hence since $T$ is a tree, the set $N(w)$ is an independent set. If the vertex $y$ is adjacent to two vertices in $N(w)$, then these two vertices together with $y$ and $w$ induce a $4$-cycle in $T$, a contradiction. Hence the vertex $y$ is adjacent to at most one vertex in $N(w)$.

\begin{claim}
\label{c:claim-2}
If $U \subseteq N(w)$, then $T = P_5$.
\end{claim}
\proof Suppose that $U \subseteq N(w)$, and so $U \cap \{x,y\} = \emptyset$. Renaming vertices if necessary, we may assume that $U = \{w_1,w_2\}$. Suppose that $k \ge 3$. If $U$ forms a coalition with the set $\{x\}$, then $U$ dominates the set $N(w)$, contradicting our earlier observation that the set $N(w)$ is an independent set. Hence, $U$ does not form a coalition with $\{x\}$, implying that $U$ forms a coalition with the set $\{y\}$. Thus, $\{y,w_1,w_2\}$ is a dominating set in $G$. Since $N(w)$ is an independent set, the vertex $y$ therefore dominates the set $N(w) \setminus \{w_1,w_2\}$. Thus, $y$ is adjacent to at least~$k-2$ vertices in $N(w)$. As observed earlier, $y$ is adjacent to at most one vertex in $N(w)$. Therefore, $k = 3$ and $y$ is adjacent to exactly one vertex in~$N(w)$, namely to the vertex~$w_3$. Since $\{x,w_3\}$ does not dominate the set $U$, the pair $\{x\}$ and $\{w_3\}$ do not form a coalition. Since $\{y,w_3\}$ does not dominate the set $U$, the pair $\{y\}$ and $\{w_3\}$ do not form a coalition. However, then the set $\{w_3\} \in \Psi$ does not form a coalition with any other set in $\Psi$, a contradiction. Hence, $k = 2$, that is, $U = N(w) = \{w_1,w_2\}$.

Suppose that $n \ge 6$. Let $z \in V(T) \setminus \{x,y,w,w_1,w_2\}$. By our earlier observations, $\{z\}$ forms a coalition with $\{x\}$ or with $\{y\}$. Thus $\{x,z\}$ or $\{y,z\}$ is a dominating set of $G$. However neither $y$ nor $z$ belong to $N(w)$, and so $\{x,y,z\}$ is not a dominating set of $G$, a contradiction. Hence, $n = 5$, that is, $V(T) = \{x,y,w,w_1,w_2\}$. Since $T$ is connected, the vertex $y$ is adjacent to at least one of $w_1$ and $w_2$. However as observed earlier, the vertex $y$ is adjacent to at most one of $w_1$ and $w_2$. Consequently, the vertex $y$ is adjacent to exactly one of $w_1$ and $w_2$, implying that $T$ is a path $P_5$.~\smallqed

\begin{claim}
\label{c:claim-3}
If $U \subseteq N[w]$, then $T = P_5$.
\end{claim}
\proof Suppose that $U \subseteq N[w]$, and so $U \cap \{x,y\} = \emptyset$. By Claim~\ref{c:claim-2}, if $U \subseteq N(w)$, the $T = P_5$ as desired. Hence renaming vertices if necessary, we may assume that $U = \{w,w_1\}$. Recall that $n \ge 5$. Let $z \in V(T) \setminus \{x,y,w,w_1\}$. Thus, $\{z\}$ forms a coalition with $\{x\}$ or with $\{y\}$. Thus $\{x,z\}$ or $\{y,z\}$ is a dominating set of $G$. Since $y$ is not adjacent with $w$, the vertex $y \notin N[w]$, and therefore the vertex $z$ is necessary adjacent to $w$. Since $z$ is an arbitrary vertex in $V(T) \setminus \{x,y,w,w_1\}$, we have that $V(T) = \{x,y\} \cup N[w]$. We now consider the vertex $w_2$. If $\{w_2\}$ forms a coalition with $\{x\}$, then $\{x,w_2\}$ is a dominating set, implying that $w_1$ and $w_2$ are adjacent, contradicting our earlier observation that $N(w)$ is an independent set. Hence, $\{w_2\}$ forms a coalition with $\{y\}$, then $\{y,w_2\}$ is a dominating set, implying that $y$ and $w_1$ are adjacent. If $k \ge 3$, then the vertex $w_3$ is not dominated by $\{y,w_2\}$, a contradiction. Hence, $k = 2$, and so $T$ is a path $P_5$.~\smallqed

\medskip
By Claim~\ref{c:claim-3}, we may assume that $|U \cap N[w]| \le 1$, for otherwise $T = P_5$ and the desired result follow. By our earlier assumptions, $|U \cap \{x,y\}| \le 1$.

\begin{claim}
\label{c:claim-4}
If $y \in U$, then $T \in \{P_5,S(2,1)\}$.
\end{claim}
\proof Suppose that $y \in U$, and so $x \notin U$. Let $U = \{y,z\}$. By definition of a coalition partition, the set $U$ is not a dominating set of $G$ since the only sets $S \in \Psi$ that dominate $G$ have cardinality $|S| = 1$. Recall that $w$ was chosen earlier as an arbitrary vertex that is not adjacent to~$y$. Renaming vertices if necessary, we can choose $w$ to be a vertex not dominated by the set $U$. Thus, $\{x,y,z\} \cap N[w] = \emptyset$. Recall that $N(w) = \{w_1,\ldots,w_k\}$. By our earlier observations, the set $N(w)$ is an independent set. Further since there is no $4$-cycle in $T$, every vertex in $V(T) \setminus N[w]$ is adjacent to at most one vertex in $N(w)$.

Since the set $\{x,y,z\}$ is not a dominating set of $T$, the sets $\{x\} \in \Psi$ and $U \in \Psi$ do not form a coalition. Hence, the set $\{x\} \in \Psi$ forms a coalition with a set $\{x'\} \in \Psi$, where $x' \notin \{x,y,z\}$. Since $\{x,x'\}$ is a dominating set of $T$, necessarily $x' \in N[w]$. Since the vertex $z$ is not dominated by the set $\{x,w\}$, we note that $x' \ne w$, implying that $x' \in N(w)$. Renaming vertices if necessary, we may assume that $x' = w_1$. Thus, $\{x,w_1\}$ is a dominating set of $G$. Since $N(w)$ is an independent set, this in turn implies that $k = |N(w)| = 1$ and that $w_1$ is adjacent to the vertex~$z$. Suppose that $n \ge 6$, and so there exists a vertex $q \in V(T) \setminus \{x,y,w,w_1,z\}$. Since $\{x,q\}$ is not a dominating set of $T$, the set $\{q\} \in \Psi$ does not form a coalition with $\{x\}$, implying that the set $\{q\} \in \Psi$ forms a coalition with the set $U \in \Psi$. However, the vertex $w$ is not dominated by the set $\{q,y,z\}$, a contradiction. Hence, $n = 5$, and so $V(T) = \{x,y,w,w_1,z\}$. As observed earlier, $\{xy,zw_1,ww_1\} \subset E(T)$. Since $T$ is connected, the only remaining edge of $T$ is either $yw_1$ or $yz$. If $yw_1 \in E(T)$, then $T = S(2,1)$, while if $yz \in E(T)$, then $T = P_5$.~\smallqed

\medskip
By Claim~\ref{c:claim-4}, we may assume that $y \notin U$, for otherwise the desired result follows.

\begin{claim}
\label{c:claim-5}
If $x \notin U$, then $T \in \{P_5,S(2,1)\}$.
\end{claim}
\proof Suppose that $x \notin U$. By assumption, $y \notin U$. Hence, $\{x\} \in \Psi$ and $\{y\} \in \Psi$. Since $T$ has no full vertices, the pair $\{x\}$ and $\{y\}$ do not form a coalition. By our earlier assumptions, $|U \cap N[w]| \le 1$. Let $U = \{z_1,z_2\}$ and let $z_1 \in U \setminus N[w]$. By supposition, $U \cap \{x,y\} = \emptyset$, and so $z_1 \notin \{x,y\}$. Since $U$ forms a coalition with $\{x\}$ or $\{y\}$, the set $\{x,y,z_1,z_2\}$ is a dominating set of $T$, implying that $z_2 \in N[w]$ in order to dominate the vertex~$w$.

Suppose that there is a vertex $z \in V(T) \setminus \left(\{x,y,z_1\} \cup N[w]\right)$. Since $z \notin U$, $\{z\} \in \Psi$. By our earlier observations, the pair $\{z\}$ and $\{x\}$ form a coalition or the pair $\{z\}$ and $\{y\}$ form a coalition. However, the set $\{x,y,z\}$ does not dominate the vertex~$w$, and so $\{z\}$ does not form a coalition with $\{x\}$ or with $\{y\}$, a contradiction. Hence, $V(T) = \{x,y,z_1\} \cup N[w]$.

Suppose that $k \ge 2$. Renaming vertices in $N(w)$ if necessary, we may assume that $w_1 \ne z_2$, that is, $w_1 \notin U$. Thus, $\{w_1\} \in \Psi$.  Since $\{x,w_1\}$ is not a dominating set of $T$ noting that $N(w)$ is an independent set of cardinality~$k \ge 2$, the pair $\{w_1\}$ and $\{x\}$ do not form a coalition, implying that the pair $\{w_1\}$ and $\{y\}$ form a coalition. Thus, the vertex~$y$ dominates the set $N(w) \setminus \{w_1\}$. Since $y$ is adjacent to at most one vertex in $N(w)$, we infer that $k = 2$, $yw_2 \in E(T)$, and $yw_1 \notin E(T)$. Analogously, if $w_2 \ne z_2$, then $yw_1 \in E(T)$ and $yw_2 \notin E(T)$, a contradiction. Hence, $w_2 = z_2$. Thus, $\{w\} \in \Psi$. Since $\{x,w\}$ does not dominate the vertex~$z_1$, the pair $\{w\}$ and $\{x\}$ do not form a coalition, implying that the pair $\{w\}$ and $\{y\}$ form a coalition. This in turn implies that $yz_1 \in E(T)$. The tree is now determined and $E(T) = \{xy,yz_1,yw_2,w_2w,ww_1\}$. Further, $\Psi = \{ \{x\}, \{y\}, \{w\}, \{w_1\}, \{z_1,w_2\} \}$. However, $\{x\}$ does not form a coalition with any other set in $\Psi$, a contradiction.

Hence, $k = 1$, and so $V(T) = \{x,y,w,w_1,z_1\}$ and $\{xy,ww_1\} \subset E(T)$. Since $T$ is a tree, exactly two of the edges in the set $\{yw_1,yz_1,w_1z_1\}$ are present in $T$. If $\{yw_1,yz_1\} \subset E(T)$ or $\{yw_1,w_1z_1\} \subset E(T)$, then $T = S(2,1)$. If $\{yz_1,w_1z_1\} \subset E(T)$, then $T = P_5$.~\smallqed

\medskip
By Claim~\ref{c:claim-5}, we may assume that $x \in U$, for otherwise the desired result follows. Since $x$ is an arbitrary vertex of degree~$1$ in $T$, we may therefore assume that the set $U$ contains all vertices of degree~$1$ in $T$. Since $|U| = 2$, this implies that $T$ is a path $P_n$ where $n \ge 5$. By Theorem~\ref{thm:path}(a), $C(P_n) = n-1$ if $n \in \{5,6\}$ and $C(P_n) \le n-2$ if $n \ge 7$. Hence since $T = P_n$ and $\mathcal{C}(T) = n-1$, we infer that $T \in \{P_5,P_6\}$. This completes the proof of Theorem~\ref{thm:tree2}.~\QED
\end{proof}

\medskip
As a consequence of Theorem~\ref{thm:tree2}, we have the following result.

\begin{corollary}
If $T$ is a tree of order $n$ with $n\ge 7$, then $\mathcal{C}(T) \le n-2$.
\end{corollary}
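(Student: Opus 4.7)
The plan is to prove the corollary by a short contradiction argument that simply combines Theorem~\ref{thmtree} and Theorem~\ref{thm:tree2}. Trivially, $\mathcal{C}(T) \le n$ always holds, so we only need to rule out $\mathcal{C}(T) = n$ and $\mathcal{C}(T) = n-1$ under the assumption $n \ge 7$.

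First, I would invoke Theorem~\ref{thmtree}: if $\mathcal{C}(T) = n$, then $T$ is a path of order at most~$4$, in particular $n \le 4$, which contradicts the hypothesis $n \ge 7$. Hence $\mathcal{C}(T) \le n-1$.

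Next, I would apply Theorem~\ref{thm:tree2}: if $\mathcal{C}(T) = n-1$, then $T \in \{K_{1,3}, P_5, P_6, S(2,1)\}$. Each of these trees has order at most~$6$ (namely orders $4$, $5$, $6$, and $4$, respectively), contradicting $n \ge 7$. Therefore $\mathcal{C}(T) \le n-2$, which is the desired inequality.

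Since the two cited results do all the work, there is no real obstacle here; the only thing to be careful about is to cite the correct theorems and to verify the orders of the four exceptional trees listed in Theorem~\ref{thm:tree2}. No additional structural analysis of $T$ is required.
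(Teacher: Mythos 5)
Your argument is correct and is exactly the intended derivation: the paper states this corollary as an immediate consequence of Theorems~\ref{thmtree} and~\ref{thm:tree2}, with no further work needed. One tiny slip: the double star $S(2,1)$ has order $5$ (two internal vertices plus three leaves), not $4$, but since $5 \le 6 < 7$ this does not affect the conclusion.
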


\section{Coalition graphs}
	
In this section, we solve Problem~\ref{prob2} and Problem~\ref{prob3}. In particular, we determine the number of the  coalition graphs that can be defined by all coalition partitions of the  path $P_k$ for a given positive integer $k$ (see Theorem~\ref{thma}) and then as a consequence of Theorem~\ref{thma}, we see that there is no universal coalition path.
	
A coalition graph corresponding to a coalition partition of a path is called a \CP-\emph{graph} in \cite{coal1}. We say a path $P_k$ defines a $\CP$-graph $G$ if there is a coalition partition $\Psi$ of $P_k$ such that $\CG(P_k,\Psi)$ is isomorphic to $G$. In  \cite{coal1}, Haynes et al. proved the following theorem.

\begin{theorem}{\rm (\cite{coal1})}
\label{lemCP}
A graph $G$ is a $\CP$-graph if and only if $G\in {\cal F}$.
\end{theorem}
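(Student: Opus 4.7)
The forward direction ($G$ is a $\CP$-graph $\Rightarrow$ $G\in\mathcal{F}$) is exactly Theorem~\ref{t:known1}, so the only work required is the converse: for each of the $18$ graphs in $\mathcal{F}$ I must exhibit a path $P_k$ and a coalition partition $\Psi$ of $P_k$ with $\CG(P_k,\Psi)\cong G$.

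The plan is a direct case analysis, one explicit construction per target graph. For each $G\in\mathcal{F}$ of order $n$, I would write down a concrete pair $(P_k,\Psi)$ and verify two things: (i) every member of $\Psi$ is either a dominating singleton or forms a coalition with some other member of $\Psi$, so that $\Psi$ is a valid coalition partition, and (ii) the pairs in $\Psi$ that form coalitions are precisely the edges of $G$ under the natural bijection between $\Psi$ and $V(G)$. Since by Theorem~\ref{thm:path}(a) we have $\mathcal{C}(P_k)\le 6$, and every graph in $\mathcal{F}$ has at most $6$ vertices, small values of $k$ suffice, and a search over $P_k$ with $k\le 10$ will cover every graph in $\mathcal{F}$.

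It is helpful to organize the constructions by the order $n$ of the target graph. For the smallest cases, short paths and obvious partitions work: $P_1$ with its trivial partition realizes $K_1$; $P_2$ with the two singletons (each of which is a dominating singleton) realizes $\overline{K_2}$; $P_4$ with the partition $\{\{v_1,v_2\},\{v_3,v_4\}\}$ realizes $K_2$ since each block fails to dominate while their union does. For the other graphs of order $3$ or $4$ one uses short paths such as $P_4$, $P_5$, or $P_6$, placing singletons and two-vertex blocks so that their pairwise unions are (or fail to be) dominating exactly as desired; for instance, the partition $\{\{v_1,v_6\},\{v_2,v_3\},\{v_4,v_5\}\}$ of $P_6$ realizes $K_3$. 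For the graphs of order $5$ and $6$, such as $F_2$, $B_1$, $S(1,2)$, and $S(2,2)$, one works with somewhat longer paths (up to $P_{10}$), again placing singletons at locally dominating interior positions and grouping the remaining vertices into blocks tailored to produce the required adjacencies in $\CG(P_k,\Psi)$.

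The main obstacle is entirely bookkeeping rather than ideas: $18$ separate witnesses need to be produced and independently verified. The verifications themselves reduce, for each candidate coalition, to checking whether the union of two parts dominates the path while each part individually fails to do so, together with the analogous check for pairs that are supposed to be non-adjacent in $\CG(P_k,\Psi)$. In the write-up I would present the result as a compact table listing, for each graph in $\mathcal{F}$, the chosen path $P_k$ and the partition $\Psi$, accompanied by the remark that each row is routinely verified by hand; one could also narrow the search a priori using Proposition~\ref{prop:Delta} and Proposition~\ref{p:path}, but since the target graphs have order at most~$6$ these restrictions do not significantly shorten the enumeration.
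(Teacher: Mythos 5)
Your proposal is correct and follows essentially the same route as the source: the forward direction is exactly Theorem~\ref{t:known1}, and the converse is an $18$-case enumeration of explicit witnesses $(P_k,\Psi)$, which is precisely what \cite{coal1} does and what this paper's own Propositions~\ref{lem33}--\ref{prop8} and Table~\ref{T11} reconstruct (your sample witnesses for $K_1$, $\overline{K_2}$, $K_2$, and $K_3$ all check out, and $k\le 10$ indeed suffices since $S(2,2)$ is first realized on $P_{10}$). The only caveat is that a complete write-up must actually exhibit and verify all $18$ partitions rather than defer them to "routine bookkeeping," but that is execution, not a gap in the argument.
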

	
According to Theorem~\ref{lemCP}, all 18 graphs of ${\cal F}$ are $\CP$-graph. For $k \ge 2$, let the ordered list $s_1, s_2,\ldots, s_k$ be the vertices of the path $P_k$  whose  edges are $\{s_i,s_{i+1}\}$ where $i \in [k-1]$. Now, we prove the  following proposition.

\begin{proposition}
\label{lem33}
The following properties hold. \\ [-24pt]
\begin{enumerate}
\item[{\rm (a)}] The path $P_1$ defines the $\CP$-graph $K_1$.
\item[{\rm (b)}] The path $P_2$ defines the $\CP$-graph $\overline{K_2}$.
\item[{\rm (c)}] The path $P_3$ defines the $\CP$-graph $K_1\cup K_2$.
\item[{\rm (d)}] For any integer $k\ge 4$, the path $P_k$ does not define the $\CP$-graphs $\overline{K_2}$ and $K_1\cup K_2$.
\end{enumerate}
\end{proposition}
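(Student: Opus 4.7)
My plan is to handle parts (a)--(c) by direct enumeration of the partitions of very small paths, and to handle part (d) via a short structural argument that invokes the absence of a universal vertex in $P_k$ for $k\ge 4$. For part (a), the only partition of $V(P_1)$ is $\{\{s_1\}\}$, and it is trivially a $c$-partition giving $\CG\cong K_1$. For (b), the one-part partition $\{V(P_2)\}$ is ruled out (the single part has size~$2$, so it can be neither a dominating singleton nor in a coalition), leaving only $\{\{s_1\},\{s_2\}\}$; both singletons are dominating sets of $P_2$ (each vertex is universal), so they need not form a coalition, and the coalition graph is $\overline{K_2}$. For (c), I will note that any part of size at least~$2$ in a partition of $V(P_3)$ is already a dominating set of $P_3$, and therefore can be neither a dominating singleton nor a coalition member; this forces the only $c$-partition of $P_3$ to be $\{\{s_1\},\{s_2\},\{s_3\}\}$, in which $\{s_2\}$ is a dominating singleton while $\{s_1\}\cup\{s_3\}$ is dominating with neither $\{s_1\}$ nor $\{s_3\}$ alone dominating, yielding $\CG\cong K_1\cup K_2$.

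For part (d), assume $k\ge 4$. The crux of the argument is that $\Delta(P_k)=2<k-1$, so $P_k$ has no universal vertex and hence no singleton dominating set. If $\Psi$ were a $c$-partition of $P_k$ with $\CG(P_k,\Psi)\cong \overline{K_2}$, then $|\Psi|=2$ and the two parts would form no coalition with each other, so by the $c$-partition definition each part would have to be a dominating singleton, forcing $|V(P_k)|=2$, a contradiction. Similarly, if $\CG(P_k,\Psi)\cong K_1\cup K_2$ then $|\Psi|=3$ and the part representing the isolated vertex of the coalition graph forms no coalition with either of the other two parts, so it too must be a singleton dominating set, again contradicting the absence of a universal vertex in $P_k$. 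The main subtlety, and essentially the only point where one might stumble, is the implicit observation that a part of size at least~$2$ which happens to dominate the graph cannot appear in any $c$-partition --- it satisfies neither clause of the definition (it is not a singleton, and a coalition member must be non-dominating) --- and this single observation drives both the enumeration in (c) and both halves of (d).
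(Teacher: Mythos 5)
Your proof is correct and follows essentially the same route as the paper: exhibit the all-singletons partitions for $P_1$, $P_2$, $P_3$ in parts (a)--(c), and for part (d) observe that an isolated vertex of the coalition graph must correspond to a singleton dominating set, which $P_k$ lacks for $k\ge 4$ since its domination number exceeds $1$. The extra uniqueness arguments you give in (b) and (c) are harmless but not needed, since the statement only asks for the existence of a suitable coalition partition.
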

\begin{proof} It is immediate that the only path that defines the $\CP$-graph $K_1$ is $P_1$. This establishes~(a). For the path $P_2$, the set $\left\{\{s_1\},\{s_2\}\right\}$ is a coalition partition and the corresponding coalition graph is $\overline{K_2}$. This establishes~(b). For the path $P_3$, the set $\left\{\{s_1\}, \{s_2\}, \{s_3\}\right\}$ is a coalition partition such that none of the sets $\{s_1\}$ and $\{s_3\}$ is a dominating set but they form a coalition. Moreover, the set $\{s_2\}$ is a singleton dominating set. Hence, the  coalition graph corresponding to  $\left\{\{s_1\}, \{s_2\}, \{s_3\}\right\}$ is $K_1\cup K_2$. This establishes~(c). Since the graphs $\overline{K_2}$ and $K_1\cup K_2$ have at least one isolated vertex, their corresponding coalition partition must contain at least one singleton dominating set. We know that for any $k \ge 4$, the domination number of $P_k$ is at least two. Therefore, every coalition partition of $P_k$  with $k\ge 4$ does  not contain any singleton dominating set. This proves part~(d).~\QED
\end{proof}

\medskip
By Proposition~\ref{prop:Delta} and Theorem~\ref{thm:path}(a), we can readily deduce the following result.

\begin{corollary}
\label{cor1}
For any positive integer $k\le 3$, the path $P_k$ does not define the $\CP$-graphs $P_3, K_3, K_{1,3}$,  $2K_2$, $P_4, C_4, F_1, K_4-e, P_2\cup P_3, F_2, B_1, P_5, S(2,1), S(2,2)$.
\end{corollary}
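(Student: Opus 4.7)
My plan is to combine the order bound on the coalition graph provided by Theorem~\ref{thm:path}(a) with a short direct check for the three-vertex exceptions. By Theorem~\ref{thm:path}(a), $\mathcal{C}(P_k) = k$ whenever $k \le 4$, so $\mathcal{C}(P_k) \le 3$ for every $k \le 3$. Consequently every coalition partition $\Psi$ of $P_k$ satisfies $|\Psi| \le 3$, and hence every $\CP$-graph $\CG(P_k,\Psi)$ has at most three vertices. This immediately rules out all listed $\CP$-graphs of order exceeding three, namely $K_{1,3}$, $2K_2$, $P_4$, $C_4$, $F_1$, $K_4-e$, $P_2 \cup P_3$, $F_2$, $B_1$, $P_5$, $S(1,2)$, and $S(2,2)$.

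This leaves only the order-three graphs $P_3$ and $K_3$ from the list. For $k \in \{1,2\}$ the coalition graph has at most two vertices, so neither is possible. For $k = 3$, a coalition partition with $|\Psi| = 3$ is forced to be the partition into singletons $\Psi = \{\{s_1\}, \{s_2\}, \{s_3\}\}$. Since $s_2$ is adjacent to both endpoints of $P_3$, the singleton $\{s_2\}$ already dominates $P_3$; by the definition of a coalition it participates in no coalition and is therefore isolated in $\CG(P_3,\Psi)$. Both $P_3$ and $K_3$ have minimum degree at least one, so neither can be isomorphic to a coalition graph containing an isolated vertex.

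I expect the only subtlety to be the treatment of $P_3$ and $K_3$, which are not excluded by the order bound alone. Proposition~\ref{prop:Delta} supplies the complementary bound $\Delta(\CG(P_k,\Psi)) \le \Delta(P_k) + 1 \le 3$, but this is consistent with both $P_3$ and $K_3$; it is the forced presence of the singleton dominating set $\{s_2\}$ in the only three-part partition of $P_3$ that actually eliminates them. Once this small observation is in hand, the corollary follows at once from the order bound given by Theorem~\ref{thm:path}(a).
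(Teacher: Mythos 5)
Your proof is correct and follows the same basic route as the paper: Theorem~\ref{thm:path}(a) bounds the order of any $\CP$-graph of $P_k$ with $k \le 3$ by~$3$, which disposes of every listed graph of order at least~$4$. Your extra step for $P_3$ and $K_3$ is both correct and genuinely needed: as you observe, the degree bound of Proposition~\ref{prop:Delta} gives only $\Delta(\CG(P_3,\Psi)) \le \Delta(P_3)+1 = 3$ and therefore does not exclude these two order-three graphs, even though the paper's one-line justification cites exactly Proposition~\ref{prop:Delta} and Theorem~\ref{thm:path}(a). The observation that the unique three-part partition of $P_3$ is the singleton partition, in which $\{s_2\}$ is a dominating set and hence an isolated vertex of the coalition graph (consistent with Proposition~\ref{lem33}(c), which identifies that coalition graph as $K_1 \cup K_2$), is what actually completes the argument, and you supply it correctly.
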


Next we prove several propositions about paths and their corresponding \CP-graphs.
	
\begin{proposition}\label{prop1}
For $k \ge 4$, the path $P_k$ defines the $\CP$-graph $K_2$.
\end{proposition}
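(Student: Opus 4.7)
The plan is to exhibit an explicit coalition partition of $P_k$ with exactly two parts that realizes the coalition graph $K_2$. Since $k \ge 4$, the path $P_k$ has no singleton dominating set (the domination number of $P_k$ is at least~$2$), so any coalition partition $\Psi = \{V_1,V_2\}$ of size~$2$ must have both $V_1$ and $V_2$ be non-dominating sets whose union dominates $P_k$. Their union is automatically $V(P_k)$, which dominates, so the only condition to verify is that neither part is itself a dominating set.

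With the vertices $s_1,s_2,\ldots,s_k$ labelled along the path as in the statement, I would take
\[
V_1 = \{s_1, s_2\}, \qquad V_2 = \{s_3, s_4, \ldots, s_k\},
\]
and set $\Psi = \{V_1, V_2\}$. First I would observe that $V_1$ is not a dominating set of $P_k$: the closed neighborhood of $V_1$ in $P_k$ is $\{s_1,s_2,s_3\}$, and because $k \ge 4$ the vertex $s_4$ is not dominated by $V_1$. Next I would observe that $V_2$ is not a dominating set of $P_k$: the only neighbor of $s_1$ in $P_k$ is $s_2$, which does not lie in $V_2$, so $s_1$ is not dominated by $V_2$. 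Since $V_1 \cup V_2 = V(P_k)$ trivially dominates $P_k$, the pair $V_1$ and $V_2$ forms a coalition in $P_k$.

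Consequently, $\Psi$ satisfies the definition of a coalition partition (both parts are non-dominating and each has a coalition partner, namely the other part). The associated coalition graph $\CG(P_k, \Psi)$ has exactly two vertices, corresponding to $V_1$ and $V_2$, and these two vertices are adjacent by construction, so $\CG(P_k,\Psi) \cong K_2$. This shows that $P_k$ defines $K_2$ for every $k \ge 4$.

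There is essentially no obstacle here; the proof is a single explicit construction followed by two short domination checks. The only thing to be careful about is the lower bound $k \ge 4$, which is precisely what is needed to guarantee that $s_4$ exists (so that $V_1$ fails to dominate) and that $V_2$ is nonempty.
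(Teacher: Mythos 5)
Your proof is correct and takes essentially the same approach as the paper: both exhibit an explicit two-part coalition partition of $P_k$ in which neither part dominates (the paper splits the path at $s_{\lfloor k/2\rfloor}$, you split after $s_2$, which coincides with the paper's choice when $k=4$). Your write-up is if anything more careful, since it spells out the two non-domination checks the paper leaves implicit.
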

\begin{proof}
For $k \ge 4$, let $\Psi = \{A, B\}$, where  $A=\{s_1,s_2,\ldots, s_{\lfloor\frac{k}{2}\rfloor}\}$  and $B=\{s_{\lfloor\frac{k}{2}\rfloor+1},\ldots, s_k\}$. The sets $A$ and $B$ form a coalition, and so $\CG(P_k,\Psi)\cong  K_2$.~\QED
\end{proof}

\begin{proposition}\label{prop2}
For $k \ge 4$, the path $P_k$ defines the $\CP$-graphs $P_3$ and $C_4$.
\end{proposition}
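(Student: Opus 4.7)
The plan is to exhibit, for each $k \ge 4$, two explicit coalition partitions of $P_k$ whose coalition graphs realize $P_3$ and $C_4$ respectively; the proof then reduces to routine domination checks, organized in three phases per construction: verifying that no part is a dominating set (so the coalition-partition conditions force every part to need a coalition partner, since none is a singleton dominating set), computing which pairs of parts union to a dominating set, and confirming the remaining pairs do not.

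For the $\CP$-graph $P_3$, I would take the three-set partition $\Psi_1 = \{A, B, C\}$ with $A = \{s_1\}$, $B = \{s_2\}$, and $C = \{s_3, s_4, \ldots, s_k\}$. The verification proceeds as follows: (i) $A$ and $B$ both fail to cover $s_k$ (using $k \ge 4$), and $C$ fails to cover $s_1$, so no part dominates $P_k$; (ii) $A \cup C$ and $B \cup C$ each omit only one vertex ($s_2$ and $s_1$ respectively), and in each case the omitted vertex is adjacent to a vertex in the union, so both unions dominate and give coalition edges $AC$ and $BC$; (iii) $A \cup B = \{s_1, s_2\}$ dominates only $\{s_1, s_2, s_3\}$ and hence does not cover $s_k$, so $AB$ is not a coalition edge. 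Therefore $\CG(P_k, \Psi_1) \cong P_3$.

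For the $\CP$-graph $C_4$, I would take the four-set partition $\Psi_2 = \{A, B, C, D\}$ with $A = \{s_1\}$, $B = \{s_2\}$, $C = \{s_j : j \text{ odd},\ 3 \le j \le k\}$, and $D = \{s_j : j \text{ even},\ 4 \le j \le k\}$. Since $k \ge 4$, both $C$ and $D$ are nonempty, so $\Psi_2$ is genuinely a vertex partition of $P_k$. The same three-phase verification applies: (i) every part fails to dominate, because $C$ and $D$ are built from indices at least $3$ and so miss $s_1$, while $A$ and $B$ miss $s_k$; (ii) each of $A \cup C$, $A \cup D$, $B \cup C$, $B \cup D$ is dominating, since in each case $A$ or $B$ covers $\{s_1, s_2, s_3\}$ and the alternating set $C$ or $D$ then reaches every later vertex either by containment or along a single path edge, because consecutive indices along $P_k$ differ in parity; (iii) $A \cup B$ fails to dominate $s_k$ and $C \cup D$ fails to dominate $s_1$. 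The coalition graph therefore has precisely the edges $AC, AD, BC, BD$, so $\CG(P_k, \Psi_2) \cong K_{2,2} = C_4$.

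The whole argument is mechanical and I do not anticipate a substantive obstacle. The only place that needs explicit case-work is the small-$k$ regime ($k = 4, 5$) of the $C_4$ construction, where $C$ or $D$ contains only a single vertex and the "alternating cover" claim in step (ii) must be checked directly rather than through the general parity argument; both small cases are straightforward to verify by inspection.
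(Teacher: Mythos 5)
Your proof is correct and follows essentially the same approach as the paper: exhibit explicit coalition partitions of $P_k$ and verify by direct domination checks which pairs form coalitions. Your specific partitions differ slightly from the paper's (for $P_3$ you make the large ``tail'' set the center of the path rather than splitting it by parity, and for $C_4$ you index the parity classes from the front rather than from $s_k$), but these are cosmetic variations on the same construction, and your verifications, including the small-$k$ cases, all check out.
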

\begin{proof}
To prove the proposition, we  provide two  coalition partitions $\Psi_1$ and $\Psi_2$ for $P_k$  with $k\ge 4$ whose corresponding coalition graphs are  $P_3$ and  $C_4$, respectively. Let $\Psi_1=\{A, B, C\}$, where
\[
A=\{s_1,s_2\}, \hspace*{0.5cm} B=\bigcup_{i=2}^{\lceil\frac{k}{2}\rceil}\{s_{2i-1}\} \hspace*{0.5cm} \mbox{and} \hspace*{0.5cm} C=\bigcup_{i=2}^{\lfloor\frac{k}{2}\rfloor}\{s_{2i}\}.
\]

The sets $A$ and $B$ form a coalition, as do the sets $A$ and $C$. Moreover, the sets $B$ and $C$ do not form a coalition. Hence, $\CG(P_k,\Psi_1)\cong  P_3$. Let $\Psi_2=\{A, B, C, D\}$, where
\[
A=\{s_1\}, \hspace*{0.5cm} B=\{s_2\}, \hspace*{0.5cm} C= \bigcup_{i=0}^{\lfloor\frac{k}{2}\rfloor-2}\{s_{k-(2i+1)}\} \hspace*{0.5cm} \mbox{and} \hspace*{0.5cm} D=\bigcup_{i=0}^{\lceil\frac{k}{2}\rceil-2}\{s_{k-2i}\}.
\]

The set $A$ forms a coalition with both sets $C$ and $D$, while the set $B$ forms a coalition with both sets $C$ and $D$. Moreover, the sets $A$ and $B$ do not form a coalition, and the sets $C$ and $D$ do not form a coalition. Hence, $\CG(P_k,\Psi_2)\cong  C_4$.~\QED
\end{proof}

\begin{proposition}\label{prop3}
For $k \ge 5$, the path $P_k$ defines the $\CP$-graph $F_1$.
\end{proposition}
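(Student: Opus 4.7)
The plan is to exhibit, for every $k \ge 5$, a single coalition partition $\Psi = \{A, B, C, D\}$ of $P_k$ whose coalition graph is isomorphic to $F_1$. Recall that $F_1$ (the paw graph) is a triangle together with one pendant edge attached to a triangle vertex, so the construction must produce one ``hub'' part that forms a coalition with each of the other three, two further parts forming the remaining triangle edge with the hub, and one ``pendant'' part in coalition only with the hub.

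I propose the partition defined uniformly for all $k \ge 5$ by
\[
A = \{s_1\} \cup \{s_i : 5 \le i \le k,\ i \text{ odd}\}, \quad B = \{s_3\}, \quad C = \{s_2\}, \quad D = \{s_i : 4 \le i \le k,\ i \text{ even}\}.
\]
A direct neighborhood calculation using $N_{P_k}(s_i) = \{s_{i-1}, s_{i+1}\}$ (with $s_0$ and $s_{k+1}$ absent) shows that $A$ dominates $V(P_k) \setminus \{s_3\}$, that $D$ dominates $\{s_3, s_4, \ldots, s_k\}$, and that the singletons $B$ and $C$ each dominate only three consecutive vertices. In particular, none of $A, B, C, D$ is itself a dominating set of $P_k$, independent of the parity of~$k$.

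The key step is identifying which pairs are coalition partners. The three pairs containing $A$ all dominate: $A \cup B$ adds the single missing vertex $s_3$, while $A \cup C$ and $A \cup D$ each add a neighbor of $s_3$ (namely $s_2 \in C$ or $s_4 \in D$). Moreover, $C \cup D$ dominates because $s_2 \in C$ covers $\{s_1, s_2, s_3\}$ while $D$ covers $\{s_3, \ldots, s_k\}$. The remaining two pairs fail: $B \cup C = \{s_2, s_3\}$ dominates only $\{s_1, s_2, s_3, s_4\}$, which omits $s_5, \ldots, s_k$ since $k \ge 5$; and $B \cup D$ leaves $s_1$ undominated since no vertex of $B \cup D$ lies in $N_{P_k}[s_1] = \{s_1, s_2\}$.

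Hence $\CG(P_k, \Psi)$ has exactly the four edges $AB, AC, AD, CD$, forming a triangle on $\{A, C, D\}$ with pendant $B$ attached to $A$, which is precisely $F_1$. The main subtlety is designing a construction that is robust across both parities of $k$; placing the odd-indexed tail in $A$ and the even-indexed tail in $D$ achieves this uniformly, because regardless of whether $k$ is odd or even the terminal vertex $s_k$ is absorbed into whichever of $A$ or $D$ receives it without disturbing the critical properties that $A$ misses only $s_3$ and that $D$ misses only $\{s_1, s_2\}$.
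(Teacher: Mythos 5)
Your construction is correct: the partition $A = \{s_1\} \cup \{s_i : 5 \le i \le k,\ i \text{ odd}\}$, $B=\{s_3\}$, $C=\{s_2\}$, $D=\{s_i : 4 \le i \le k,\ i \text{ even}\}$ is indeed a partition of $V(P_k)$, none of its parts dominates, and the six pairwise checks come out exactly as you state ($AB$, $AC$, $AD$, $CD$ are coalitions; $BC$ fails at $s_5$ and $BD$ fails at $s_1$), giving the paw graph $F_1$ for every $k \ge 5$.

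The strategy is the same as the paper's — exhibit one explicit coalition partition into four parts and verify all pairs — but your partition is genuinely different and cleaner. The paper splits into three cases ($k=6$; $k \ge 8$ even; $k \ge 5$ odd), builds $A$ and $B$ as interleaved unions indexed from the right end of the path, and realizes $F_1$ as the triangle $\{A,B,D\}$ with pendant $C$. You instead anchor the ``hole'' of each large part at the left end ($A$ misses exactly $s_3$, $D$ misses exactly $\{s_1,s_2\}$) and let the two tails absorb $s_k$ according to its parity, which removes the case analysis entirely and makes the pair checks one-line observations. The paper's indexed unions buy nothing here that your uniform description does not; your version is the one I would keep.
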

\begin{proof}
Let $\Psi=\{A, B, C, D\}$ be the coalition partition of $P_k$ defined as follows. For $k=6$, let $A=\{s_1,s_5\}$, $B=\{s_3,s_6\}$, $C=\{s_4\}$ and $D=\{s_2\}$.  For $k \ge 8$ even, let
\[
A=\left(\bigcup_{i=2}^{\frac{k}{2}-2}\{s_{k-2i}\}\right)\cup\{s_1, s_k\}, \hspace*{0.5cm}  B=\left(\bigcup_{i=2}^{\frac{k}{2}-2}\{s_{k-(2i-1)}\}\right)\cup \{s_2\}, \hspace*{0.5cm}
C=\{s_{k-2}\}  \hspace*{0.5cm} \mbox{and} \hspace*{0.5cm}
D=\{s_3, s_{k-1}\}.
\]

For $k \ge 5$ odd, let
\[
A=\left(\bigcup_{i=2}^{\frac{k-3}{2}}\{s_{k-2i}\}\right)\cup\{s_1, s_k\}, \hspace*{0.5cm}  B=\left(\bigcup_{i=2}^{\frac{k-3}{2}}\{s_{k-(2i-1)}\}\right)\cup \{s_2\}, \hspace*{0.5cm}
C=\{s_{k-2}\}  \hspace*{0.5cm} \mbox{and} \hspace*{0.5cm}
D=\{s_{k-1}\}.
\]

The set $A$ forms a coalition with each of the sets $B$, $C$ and $D$, and the sets $B$ and $D$ form a coalition. Moreover, the sets $B$ and $C$ do not form a coalition, and the sets $C$ and $D$ do not form a coalition. Hence, $\CG(P_k,\Psi)\cong F_1$.~\QED
\end{proof}

\begin{proposition}
\label{prop4}
For $k \ge 6$, the path $P_k$ defines the $\CP$-graphs $K_{1,3}$, $K_3$, $K_4-e$, and $P_2 \cup P_3$.
\end{proposition}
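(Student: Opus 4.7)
The plan is to prove Proposition~\ref{prop4} by constructing, for each of the four target graphs $K_{1,3}$, $K_3$, $K_4-e$, and $P_2 \cup P_3$, an explicit coalition partition $\Psi$ of $P_k$ whose coalition graph $\CG(P_k, \Psi)$ is isomorphic to the desired graph. For every construction the verification breaks into three routine checks: each claimed coalition $V_i \cup V_j$ must dominate $P_k$, each non-claimed pair $V_i \cup V_j$ must fail to dominate $P_k$, and no part $V_i$ may itself be a non-singleton dominating set of $P_k$. This mirrors the template already used in Propositions~\ref{prop1}--\ref{prop3}.

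The general strategy is to place a few ``anchor'' vertices near the endpoints $s_1$ and $s_k$ into small distinguished parts, and to distribute the remaining interior vertices in an alternating or interleaved fashion among the other parts, so that (a)~no part dominates $P_k$ on its own because it systematically misses a run of three consecutive vertices, and (b)~exactly the intended unions cover $V(P_k)$. For the star $K_{1,3}$ we will single out a ``hub'' part $A$ whose contents are required in every dominating pair, and choose three further parts $B,C,D$ each of which complements $A$ while no two of $B,C,D$ together substitute for the vertices locked inside $A$. For $K_3$ we use three parts whose supports are mutually complementary; a base construction such as $A=\{s_2,s_4\}$, $B=\{s_3,s_5\}$, $C=\{s_1,s_6\}$ works for $k=6$, and for $k \ge 7$ we absorb the remaining vertices into one of the three parts in a way that does not create a new dominating set. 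The construction for $K_4-e$ is obtained by refining a $K_3$ construction: split one part into two subparts that each coalesce with both of the other original parts but not with one another. For $P_2 \cup P_3$ we will combine two disjoint mini-constructions --- a $K_2$-type pair and a $P_3$-type triple, built as in Propositions~\ref{prop1} and~\ref{prop2} --- over disjoint vertex segments of the path.

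The main obstacle will be handling the parity and residue class of $k$: depending on whether $k$ is even or odd (and perhaps on $k \bmod 3$), the explicit formulas for the parts must be adjusted, exactly as in Proposition~\ref{prop3}, and one must also take care that the ``inflated'' part used to absorb the extra interior vertices for large $k$ does not inadvertently become a dominating set. Once the candidate partitions are written down for each residue class, the verification is a direct but notation-heavy computation showing that the closed neighborhood of each claimed union covers all of $V(P_k)$, while each forbidden union leaves at least one vertex undominated, so that the resulting coalition graph has precisely the intended edge set.
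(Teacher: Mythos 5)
Your overall template---an explicit coalition partition for each target graph, followed by the three routine checks (claimed unions dominate, unclaimed unions do not, no part is itself a dominating set)---is exactly the approach the paper takes, and your hub construction for $K_{1,3}$ and your base partition $A=\{s_2,s_4\}$, $B=\{s_3,s_5\}$, $C=\{s_1,s_6\}$ for $K_3$ on $P_6$ are both sound. The difficulty, however, lies entirely in extending to general $k$, and here two of your concrete strategies would fail. For $K_3$ you propose to absorb the vertices $s_7,\ldots,s_k$ into a single one of the three parts, say $X$. In a three-part partition each pairwise union is the complement of the third part, and for $k\ge 8$ the vertex $s_k$ satisfies $N[s_k]=\{s_{k-1},s_k\}\subseteq X$, so the union of the other two parts does not dominate $s_k$ and the corresponding edge of the intended triangle disappears. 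The surplus vertices must instead be distributed among all three parts so that no part contains the closed neighbourhood of any vertex while each part still misses one; the paper achieves this with a mod-$3$ interleaving of the interior together with corrections at the two ends, and some interleaving of this kind is unavoidable.

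The plan for $P_2\cup P_3$ has the same defect in a starker form: every coalition must dominate all of $P_k$, so if the $K_2$-type pair $\{A,B\}$ is supported on an initial segment $S$ of the path, then $A\cup B=S$ dominates only $S$ and the single vertex following it, and the intended edge $AB$ is absent as soon as the complementary segment contains more than one vertex. The five parts must all reach across the whole path; in the paper's partition the two large sets are interleaved along $P_k$ and only the three small sets are localized near one end. Your $K_4-e$ plan (split one part $C$ of a $K_3$ partition into $C_1\cup C_2$) is repairable but not automatic, since $A\cup C_1$ and $B\cup C_1$ are proper subsets of $A\cup C$ and $B\cup C$ and need not remain dominating. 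Finally, apart from the single $k=6$ example, no partition is actually written down for general $k$, and the parity and residue bookkeeping that you correctly identify as the main obstacle is precisely the content that is missing; as it stands this is a plan rather than a proof.
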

\begin{proof}
To prove the proposition, we  provide four  coalition partitions $\Psi_1, \Psi_2, \Psi_3$ and $\Psi_4$ for $P_k$  with $k\ge 6$ whose corresponding coalition graphs are  $K_{1,3}, K_3, K_4-e$ and $P_2\cup P_3$, respectively.

Let $\Psi_1=\{A, B, C, D\}$, where $A=\{s_1,s_2, \ldots, s_{k-4}\}\cup\{s_k\}$, $B=\{s_{k-3}\}$, $C=\{s_{k-2}\}$ and $D=\{s_{k-1}\}$. The set $A$ forms a coalition with each of the sets $B$, $C$ and $D$. Moreover, there are no coalitions among the sets $B$, $C$ and $D$. Hence, $\CG(P_k,\Psi) \cong K_{1,3}$. Let $\Psi_2=\{A, B, C\}$, where

\[
A=\left(\bigcup_{i=2}^{{\lfloor\frac{k}{3}\rfloor-1}}\{s_{k-3i}\}\right)\cup \{s_1, s_{k-2}\}, \hspace*{0.5cm}
B=\left(\bigcup_{i=1}^{\lfloor\frac{k}{3}\rfloor-1}\{s_{k-(3i+1)}\}\right)\cup \{s_2, s_{k}\} \hspace*{0.5cm}
\]
and
\[
C=\left(\bigcup_{i=1}^{{\lfloor\frac{k+1}{3}\rfloor-2}}\{s_{k-(3i+2)}\}\right)\cup \{s_{k-1}, s_{k-3}\}.
\]

The set $A$ forms a coalition with each of the sets $B$ and $C$. Moreover, the sets $B$ and $C$ form a coalition. Hence, $\CG(P_k,\Psi_2)\cong  K_3$. Let $\Psi_3=\{A, B, C, D\}$, where the coalition partition is defined as follows. For $k \ge 6$ even, let $C=\{s_{k-3}\}$ and $D=\{ s_{k-2}\}$, and let
\[
A=\left(\bigcup_{i=0}^{\frac{k}{2}-3}\{s_{2i+1}\}\right)\cup\{s_{k-1}\}
\hspace*{0.25cm} \mbox{and} \hspace*{0.25cm} B=\left(\bigcup_{i=1}^{\frac{k}{2}-2}\{s_{2i}\}\right)\cup \{s_k\}.
\]

For $k \ge 7$ odd, let $C=\{s_{k-3}\}$ and $D=\{ s_{k-2}\}$, and let
\[
A=\left(\bigcup_{i=0}^{\frac{k-5}{2}}\{s_{2i+1}\}\right)\cup\{s_{k}\}
\hspace*{0.25cm} \mbox{and} \hspace*{0.25cm}
B=\left(\bigcup_{i=1}^{\frac{k-5}{2}}\{s_{2i}\}\right)\cup \{s_{k-1}\}.
\]

The set $A$ forms a coalition with each of the sets $B$, $C$ and $D$. Moreover, the set $B$ forms a coalition with each of the sets $C$ and $D$. However, the sets $C$ and $D$ do not form a coalition. Hence, $\CG(P_k,\Psi_3)\cong  K_4-e$. Let $\Psi_{4}=\{A, B, C, D, F\}$, where the coalition partition is defined as follows. For $k \ge 6$ even, let $B=\{s_2, s_{k-4}\}$, $C=\{ s_{k-3}\}$, and $D = \{ s_{k-2}\}$ and let
\[
A=\left(\bigcup_{i=0}^{\frac{k-6}{2}}\{s_{2i+1}\}\right)\cup\{s_{k}\}
\hspace*{0.25cm} \mbox{and} \hspace*{0.25cm}
F=\left(\bigcup_{i=2}^{\frac{k-6}{2}}\{s_{2i}\}\right)\cup \{s_{k-1}\}.
\]

For $k \ge 7$ odd, let $B=\{s_2, s_{k-5}\}$, $C=\{ s_{k-4}\}$, and $D=\{ s_{k-3}\}$,  and let
\[
A=\left(\bigcup_{i=0}^{\frac{k-7}{2}}\{s_{2i+1}\}\right)\cup\{s_{k-1}\}
\hspace*{0.25cm} \mbox{and} \hspace*{0.25cm}
F=\left(\bigcup_{i=2}^{\frac{k-7}{2}}\{s_{2i}\}\right)\cup \{s_{k-2}, s_{k}\}.
\]

The sets $A$ and $C$, the sets $A$ and $D$, and the sets $B$ and $F$ are the only pairs among the sets $A$, $B$, $C$, $D$, and $F$ in $\Psi_{4}$ that form a coalition. Hence, $\CG(P_k,\Psi_4)\cong  P_2\cup P_3$.~\QED
\end{proof}

\begin{proposition}
\label{prop5}
For $k \ge 7$, the path $P_k$ defines the $\CP$-graphs $P_4$ and $P_5$.
\end{proposition}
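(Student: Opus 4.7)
The plan is to exhibit, for each of the two target coalition graphs, an explicit coalition partition of $P_k$ and then verify that exactly the required pairs form coalitions.

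For the $\CP$-graph $P_4$, I would take $A=\{s_1\}$ and $D=\{s_k\}$ as singletons at the two endpoints, and split the interior $\{s_2,\ldots,s_{k-1}\}$ between $B$ and $C$ in an alternating fashion, subject to the two forced memberships $s_{k-1}\in B$ and $s_2\in C$. A natural choice (with a small parity tweak when $k$ is odd so that $s_{k-1}$ lands in $B$) is to put odd-indexed interior vertices into $B$ and even-indexed ones into $C$. With this setup: (i) no individual member dominates, since $A$ and $D$ are too small, $B$ misses $s_1$ (because $s_2\in C$), and $C$ misses $s_k$ (because $s_{k-1}\in B$); (ii) $A\cup B$ dominates (the endpoint $s_k$ is covered via $s_{k-1}\in B$ and the alternating pattern covers the middle), $C\cup D$ dominates symmetrically, and $B\cup C=\{s_2,\ldots,s_{k-1}\}$ clearly dominates $P_k$; (iii) the three "non-edges" fail — $A\cup C$ misses $s_k$, $B\cup D$ misses $s_1$, and $A\cup D$ misses the whole middle for $k\ge 5$. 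Hence the coalition graph is precisely $P_4$.

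For the $\CP$-graph $P_5$ the first step is a structural observation that will drive the construction. Every coalition must dominate $s_1$, so the two members of the partition that contain $s_1$ and $s_2$ together cover every edge of the coalition graph $P_5$; since the unique size-$2$ vertex cover of $P_5$ is $\{V_2,V_4\}$, we must place $s_1$ and $s_2$ in distinct sets drawn from $\{V_2,V_4\}$, and the symmetric argument at the far endpoint forces $s_{k-1}$ and $s_k$ into $\{V_2,V_4\}$ as well. This rules out placements and immediately limits what $V_1,V_3,V_5$ can look like.

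I would then build $V_2,V_4$ around those mandatory vertices (adding a carefully chosen interior vertex when needed to prevent unwanted coalitions) and partition the rest of the interior among $V_1,V_3,V_5$. The goal is to arrange things so that each of $V_1\cup V_2$, $V_2\cup V_3$, $V_3\cup V_4$, $V_4\cup V_5$ just barely dominates $P_k$, while each of $V_1\cup V_3$, $V_1\cup V_5$, $V_3\cup V_5$ fails to dominate $s_1$ or $s_k$ (being confined to the interior), and $V_2\cup V_4$ fails to cover the middle. The genuinely delicate non-coalitions are $V_1\cup V_4$ and $V_2\cup V_5$: here one must be careful about which interior vertices go into $V_1$ and $V_5$, so as to leave at least one interior vertex uncovered by those unions. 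This is the main obstacle.

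Concretely I would present one partition for the simpler regime (for example, when $k\ge 7$ with $k\neq 9$, take $V_2=\{s_1,s_{k-1}\}$, $V_4=\{s_2,s_k\}$, and split $\{s_3,\ldots,s_{k-2}\}$ into three staggered blocks realizing $V_1,V_3,V_5$, with $s_{k-4}\in V_1$, $s_5\in V_5$, and $s_{k-3},s_{k-2}\notin V_1$, $s_3,s_4\notin V_5$), and a separate partition for $k=9$ that avoids the unavoidable collision forced by $s_{k-4}=s_5$ in the generic construction — for instance $V_1=\{s_7\},\ V_2=\{s_1,s_4,s_9\},\ V_3=\{s_3,s_6\},\ V_4=\{s_2,s_8\},\ V_5=\{s_5\}$. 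In each case the verification is a direct enumeration of the $\binom{5}{2}=10$ pairwise unions, checking that exactly the four prescribed unions dominate $P_k$ and the others omit at least one vertex, which yields the coalition graph $P_5$.
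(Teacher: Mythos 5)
Your construction for the $\CP$-graph $P_4$ is correct and complete: with $A=\{s_1\}$, $D=\{s_k\}$ and the interior alternating between $B$ and $C$ subject to $s_2\in C$ and $s_{k-1}\in B$, all ten pairwise checks go through as you describe (and this is arguably cleaner than the paper's partition, which uses the singletons $\{s_{k-4}\}$, $\{s_{k-1}\}$ together with two large alternating sets). The gap is in the $P_5$ half. Your generic recipe fixes $V_2=\{s_1,s_{k-1}\}$ and $V_4=\{s_2,s_k\}$ and asks $V_1,V_3,V_5$ to partition the middle $M=\{s_3,\ldots,s_{k-2}\}$; the required coalitions $V_1V_2$, $V_2V_3$, $V_3V_4$, $V_4V_5$ then force $V_1$ to dominate $\{s_3,\ldots,s_{k-3}\}$, $V_3$ to dominate $\{s_3,\ldots,s_{k-2}\}$, and $V_5$ to dominate $\{s_4,\ldots,s_{k-2}\}$. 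Since all three sets lie inside $M$ and a vertex of a path dominates at most three vertices of a segment, this needs $|V_1|\ge\lceil(k-5)/3\rceil$, $|V_3|\ge\lceil(k-4)/3\rceil$, $|V_5|\ge\lceil(k-5)/3\rceil$, while $|V_1|+|V_3|+|V_5|=k-4$. When $k\equiv 0 \pmod 3$ these lower bounds sum to $k-3>k-4$, so the recipe is infeasible for \emph{every} such $k$, not only $k=9$: for $k=12$ the middle has $8$ vertices but each of $V_1,V_3,V_5$ would need at least $3$. You diagnosed the $k=9$ failure as the accidental collision $s_{k-4}=s_5$, but the true obstruction is this counting argument, and it recurs for $k=12,15,18,\ldots$, none of which your proof covers.

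The repair is exactly the move you made ad hoc for $k=9$ (your explicit $k=9$ partition does check out, as does the forced $k=7$ instance of the generic recipe): insert one interior vertex into $V_2$ or $V_4$, which shortens the segments that $V_1,V_3,V_5$ must dominate and restores feasibility — but this must be specified and verified for the entire residue class $k\equiv 0\pmod 3$, not just $k=9$. Separately, ``split into three staggered blocks'' is not a specification; for $k\not\equiv 0\pmod 3$ the admissible split is essentially forced (arithmetic progressions of common difference $3$), but it should be written out. For comparison, the paper sidesteps the whole issue by inverting the roles: its two vertex-cover sets are the large ones (one containing $s_1$ and the even-indexed vertices from $s_6$ onward, the other containing $s_2$ and the odd-indexed vertices from $s_7$ onward) while the remaining three sets are the singletons $\{s_3\},\{s_4\},\{s_5\}$, so a single uniform verification covers all $k\ge 7$.
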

\begin{proof}
To prove the proposition, we  provide two  coalition partitions $\Psi_1$ and $\Psi_2$ for $P_k$  with $k \ge 7$ whose corresponding coalition graphs are $P_4$ and $P_5$, respectively. Let $\Psi_1=\{A, B, C, D\}$, where $C=\{s_{k-4}\}$ and $D=\{s_{k-1}\}$, and where
\[
A=\left(\bigcup_{i=3}^{\lceil\frac{k}{2}\rceil-2}\{s_{k-2i}\}\right)\cup\{s_1, s_{k-2}, s_k\}
\hspace*{0.25cm} \mbox{and} \hspace*{0.25cm}
B=\left(\bigcup_{i=3}^{\lfloor\frac{k}{2}\rfloor-1}\{s_{k-(2i-1)}\}\right)\cup \{s_2, s_{k-3}\}.
\]

The sets $C$ and $A$, the sets $A$ and $B$, and the sets $B$ and $D$ are the only pairs among the sets $A$, $B$, $C$, and $D$ in $\Psi_{1}$ that form a coalition. Hence, $\CG(P_k,\Psi_1) \cong  P_4$. Let $\Psi_{2}=\{A, B, C, D, F\}$, where $C=\{ s_3\}$, $D=\{ s_4\}$,  and $F = \{s_{5}\}$, and where
\[
A=\left(\bigcup_{i=3}^{\lfloor \frac{k}{2} \rfloor}\{s_{2i}\}\right)\cup\{s_{1}\}
\hspace*{0.25cm} \mbox{and} \hspace*{0.25cm}
B=\left(\bigcup_{i=3}^{\lceil \frac{k}{2} \rceil-1}\{s_{2i+1}\}\right)\cup\{s_{2}\}.
\]

The sets $C$ and $A$, the sets $A$ and $D$, the sets $D$ and $B$, and the set $B$ and $F$  are the only pairs among the sets $A$, $B$, $C$, $D$, and $F$ in $\Psi_{2}$ that form a coalition. Hence, $\CG(P_k,\Psi_2) \cong  P_5$.~\QED
\end{proof}

\begin{proposition}
\label{prop6}
For $k \ge 8$, the path $P_k$ defines the $\CP$-graph $2K_2$.
\end{proposition}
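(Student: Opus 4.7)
The plan is to exhibit, for each $k\ge 8$, an explicit coalition partition $\Psi=\{A,B,C,D\}$ of $P_k$ whose coalition graph is $2K_2$. Since $2K_2$ is the graph on four vertices with two disjoint edges, I need exactly two coalitions $\{A,B\}$ and $\{C,D\}$ and no cross coalitions between the two pairs. Motivated by the working example for $k=8$ given by $A=\{s_1,s_4\}$, $B=\{s_6,s_7\}$, $C=\{s_3,s_5\}$, $D=\{s_2,s_8\}$, I would define in general
\[
A = \{s_1\}\cup\{s_{2i}:2\le i\le\lfloor(k-4)/2\rfloor\},\qquad B=\{s_{k-2},s_{k-1}\},\qquad D=\{s_2,s_k\},
\]
and let $C=V(P_k)\setminus(A\cup B\cup D)$. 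Spelled out, $C=\{s_3,s_5,\ldots,s_{k-3}\}$ when $k$ is even, and $C=\{s_3,s_5,\ldots,s_{k-4}\}\cup\{s_{k-3}\}$ when $k$ is odd.

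The verification then proceeds in three stages. First, I would confirm that both $A\cup B$ and $C\cup D$ dominate $P_k$: in $A\cup B$ the vertices of $A$ are two apart starting from $s_4$ and so dominate $s_1,\ldots,s_{k-3}$ (or $s_1,\ldots,s_{k-4}$ in the odd case), while $B$ covers the final four vertices; symmetrically, the odd-indexed vertices of $C$ cover the interior $s_2,\ldots,s_{k-2}$ and $D$ takes care of the four endpoints. Second, I would check that none of $A,B,C,D$ is itself dominating, so that $\Psi$ is a valid coalition partition of size $4$: the vertex $s_{k-1}$ is missed by $A$, the vertex $s_1$ is missed by both $B$ and $C$, and $s_4$ is missed by $D$. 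Third, for each of the four cross pairs I would exhibit one undominated vertex: $A\cup C$ misses $s_{k-1}$, $A\cup D$ misses $s_{k-2}$, $B\cup C$ misses $s_1$, and $B\cup D$ misses the internal vertex $s_4$ (using $k-2\ge 6$ to ensure that neither neighbour of $s_4$ lies in $B\cup D$). These stages together give $\CG(P_k,\Psi)\cong 2K_2$.

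The main obstacle is the parity of $k$: the top end of $A$ is at $s_{k-4}$ when $k$ is even but at $s_{k-5}$ when $k$ is odd, which forces the vertex $s_{k-3}$ into $C$ in the odd case (rather than appearing there automatically as an odd-indexed interior vertex) in order to keep $C\cup D$ dominating and $A$ non-dominating. Apart from this parity adjustment, every verification above reduces to the routine one-line observation that some specific vertex has both of its path neighbours outside the relevant union, so no subtler argument is required.
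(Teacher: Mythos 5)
Your construction is correct: for every $k\ge 8$ the partition $\{A,B,C,D\}$ you describe is a valid coalition partition of $P_k$ in which exactly the pairs $\{A,B\}$ and $\{C,D\}$ form coalitions, and each of the domination and non-domination checks you list goes through (I verified the boundary cases $k=8,9$ and the general even/odd patterns). This is essentially the same approach as the paper's, namely exhibiting an explicit four-set coalition partition whose coalition graph is $2K_2$; the only difference is that your single formula (with its parity adjustment in $C$) covers all $k\ge 8$ uniformly, whereas the paper handles $k=8,9,10,11$ by separate ad hoc partitions and gives a general formula only for $k\ge 12$.
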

\begin{proof}
Let $\Psi=\{A, B, C, D\}$ be the coalition partition of $P_k$ defined as follows.

For $k = 8$, let  $A = \{s_1,s_7\}$, $B = \{s_2,s_8\}$, $C = \{s_3,s_4\}$ and $D = \{s_5,s_6\}$. 

For $k=9$, let  $A = \{s_1,s_8\}$, $B = \{s_2,s_4,s_9\}$, $C = \{s_3,s_5\}$ and $D = \{s_6,s_7\}$. 

For $k=10$, let  $A = \{s_1,s_9\}$, $B = \{s_2,s_4,s_{10}\}$, $C = \{s_3,s_5,s_6\}$ and $D = \{s_7,s_8\}$. 

For $k=11$, let  $A = \{s_1,s_{10}\}$, $B = \{s_2,s_4,s_{11}\}$, $C = \{s_3,s_5,s_7\}$ and $D = \{s_6, s_8,s_9\}$. 

For $k \ge 12$, let  $A=\{s_1, s_{k-3}, s_{k-1}\}$ and $B=\{s_2, s_4, s_k\}$, and let
\[
C=\left(\bigcup_{i=2}^{\lceil\frac{k}{2}\rceil-3}\{s_{2i+1}\}\right)\cup\{s_3\}
\hspace*{0.25cm} \mbox{and} \hspace*{0.25cm}
D=\left(\bigcup^{\lceil\frac{k}{2}\rceil-2}_{i=3}\{s_{2i}\}\right)\cup\{s_{k-2}\}.
\]

The sets $A$ and $C$, and the sets $B$ and $D$, are the only pairs among the sets $A$, $B$, $C$, and $D$ in $\Psi$ that form a coalition. Hence, $\CG(P_k,\Psi) \cong  2K_2$.~\QED
\end{proof}

\begin{proposition}
\label{prop7}
For $k \ge 9$, the path $P_k$ defines the $\CP$-graphs $F_2$, $B_1$, and $S(2,1)$.
\end{proposition}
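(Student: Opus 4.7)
The plan is to mimic the constructive approach used in Propositions~\ref{prop3}--\ref{prop6}: for each of the three target $\CP$-graphs $F_2$, $B_1$, and $S(2,1)$, I would exhibit an explicit coalition partition $\Psi = \{A, B, C, D, E\}$ of $P_k$ with exactly five parts whose coalition graph $\CG(P_k,\Psi)$ is isomorphic to the desired graph. In each construction, two of the parts will be ``long'' backbone parts built from alternating odd/even-indexed vertices covering most of the path, while the remaining three parts will be small (singletons or pairs of specific vertices) placed near one end of the path to encode the fine structure of the target.

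Before writing down the partitions, I would first translate each target graph into a list of required coalition pairs and required non-coalition pairs among $\{A,B,C,D,E\}$. For $F_2$ the edges are those of a $4$-cycle plus one pendant; for $B_1$ they are a triangle plus two pendants at distinct triangle vertices; for $S(2,1)$ they are a $K_{1,3}$ with one extra leaf subdivided from one of the three leaves. Since $\gamma(P_k) \ge 3$ for $k \ge 9$, no singleton is a dominating set, so the ``$|V_i|=1$ dominating'' clause of the definition plays no role, and the only verification needed is that each required coalition pair jointly dominates $P_k$ and each required non-coalition pair fails to dominate some specific vertex of $P_k$.

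Next I would set up the concrete partitions. I plan to designate three short parts occupying a constant-sized prefix (involving $s_1, s_2, s_3, s_4, s_5$ and possibly $s_6, s_7$) to break the symmetry and realize the specific adjacency pattern, and let the remaining vertices be split between two alternating-index parts $A$ and $B$ that together cover $\{s_8,\ldots,s_k\}$, analogously to the $C$ and $D$ parts used in Proposition~\ref{prop6}. Parity of $k$ will control the precise endpoint of each alternating part, so a case split on $k \bmod 2$ is likely. If the uniform construction breaks down for the very smallest values in the range (say $k=9$ or $k=10$), I would handle those with tailored \emph{ad hoc} partitions, just as in Proposition~\ref{prop6}.

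The main obstacle is verifying the non-coalitions. Producing a dominating pair is relatively easy, since the alternating backbone already dominates most of the path; producing a certified \emph{non}-dominating pair is more delicate, because for each non-edge $\{X,Y\}$ of the target coalition graph I must exhibit a witness vertex $s_i \in V(P_k)$ with $s_i \notin N[X \cup Y]$ that persists for every $k \ge 9$ in the construction. I plan to ensure this by making each small part deliberately miss a chosen vertex in the prefix (for example, placing the short parts so that any pair of them, together with the complement of one of the alternating parts, still omits some $s_j$ with $j \in \{3,4,5\}$ from its closed neighbourhood). Once every pair is accounted for, checking that $\CG(P_k, \Psi)$ is isomorphic to the intended graph reduces to a mechanical list of the dominating versus non-dominating pairs among the five parts, case-by-case for $F_2$, $B_1$, and $S(2,1)$ and, within each case, for the parities of $k$.
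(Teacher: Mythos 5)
Your overall strategy (five explicit parts, two long alternating ``backbone'' parts plus three small parts, followed by a pairwise check of which unions dominate $P_k$) is exactly the architecture of the paper's proof, but as written your proposal has two problems. The first is that it is only a plan: for an existence statement of this type the explicit partitions \emph{are} the proof, and you never write any down, nor verify a single coalition or non-coalition. The second is more substantive: the specific architecture you describe --- three short parts occupying the prefix $\{s_1,\ldots,s_7\}$ (so containing $s_1$ and $s_2$) and two long parts splitting the remaining vertices $\{s_8,\ldots,s_k\}$ --- cannot work. Any dominating set of $P_k$ must meet $N[s_1]=\{s_1,s_2\}$ and $N[s_k]=\{s_{k-1},s_k\}$. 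In your layout the only parts meeting $\{s_1,s_2\}$ are (at most two of) the short prefix parts, and the only parts meeting $\{s_{k-1},s_k\}$ are the two long parts; hence every coalition must pair a prefix part containing $s_1$ or $s_2$ with a long part. The coalition graph is then a subgraph of $K_{2,2}$ together with at least one isolated short part, which is not even a legal coalition partition for $k\ge 9$ (no singleton dominates, so every part needs a partner), and in any case none of $F_2$, $B_1$, $S(2,1)$ is bipartite with both sides of size at most two.

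The fix, which is what the paper does, is to put $s_1$ into one long part and $s_2$ into the other, and to confine the three small parts to $\{s_3,\ldots,s_7\}$. Then the two long parts $A\ni s_1$ and $B\ni s_2$ form the size-two vertex cover forced by Proposition~\ref{p:path}, every edge of the target graph is incident with $A$ or $B$ (note that each of $F_2$, $B_1$ and $S(2,1)$ does have $\beta=2$), and the fine structure --- which of $C,D,F$ attaches to $A$, to $B$, or to both, and whether $AB$ is an edge --- is controlled by which of $s_3,\ldots,s_7$ each small part absorbs and by which low-index vertices are appended to $A$ and $B$. If you rebuild your constructions around that cover-placement and then actually list the partitions and verify each of the $\binom{5}{2}$ pairs (with the parity case split you anticipate), you will arrive at essentially the paper's proof.
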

\begin{proof}
		To prove the proposition, we  provide three  coalition partitions 	$\Psi_1, \Psi_2$ and $\Psi_3$ for $P_k$  with $k\ge 9$ whose corresponding coalition graphs are $F_2$, $B_1$, and $S(2,1)$, respectively. Let $\Psi_{1}=\{A, B, C, D, F\}$, where $C=\{ s_3,s_6\}$, $D=\{ s_4, s_7\}$,  and $F= \{s_{5}\}$, and where
\[
A=\left(\bigcup_{i=4}^{\lceil \frac{k}{2} \rceil-1}\{s_{2i+1}\}\right)\cup\{s_{1}\}
\hspace*{0.25cm} \mbox{and} \hspace*{0.25cm}
B=\left(\bigcup_{i=4}^{\lfloor \frac{k}{2} \rfloor}\{s_{2i}\}\right)\cup\{s_{2}\}.
\]

The set $A$ forms a coalition with each of the sets $C$ and $D$, and the set $B$ forms a coalition with each of the sets $C$, $D$ and $F$. However, these are the only pairs of sets in $\Psi_{1}$ that form a coalition, implying that $\CG(P_k,\Psi_1) \cong  F_2$. Let  $\Psi_{2}=\{A, B, C, D, F\}$, where  $C=\{ s_3\}$, $D=\{ s_5\}$,  and $F= \{s_{7}\}$, and where
\[
A=\left(\bigcup_{i=4}^{\lfloor \frac{k}{2} \rfloor}\{s_{2i}\}\right)\cup\{s_{1}, s_4\}
\hspace*{0.25cm} \mbox{and} \hspace*{0.25cm}
B=\left(\bigcup_{i=4}^{\lceil \frac{k}{2} \rceil-1}\{s_{2i+1}\}\right)\cup\{s_{2}, s_6\}.
\]

The set $A$ forms a coalition with each of the sets $B$, $D$ and $F$, and the set $B$ forms a coalition with each of the sets $A$, $C$ and $D$. However, these are the only pairs of sets in $\Psi_{2}$ that form a coalition, implying that $\CG(P_k,\Psi_2) \cong  B_1$. Let $\Psi_{3}=\{A, B, C, D, F\}$, where $C=\{ s_3\}$, $D=\{ s_4\}$, and $F = \{s_{7}\}$, and where
\[
A=\left(\bigcup_{i=3}^{\lfloor \frac{k}{2} \rfloor}\{s_{2i}\}\right)\cup\{s_{1}\}
\hspace*{0.25cm} \mbox{and} \hspace*{0.25cm}
B=\left(\bigcup_{i=4}^{\lceil \frac{k}{2} \rceil-1}\{s_{2i+1}\}\right)\cup\{s_{2}, s_5\}.
\]

The set $A$ forms a coalition with each of the sets $B$, $C$ and $D$, and the set $B$ forms a coalition with each of the sets $A$ and $F$. However, these are the only pairs of sets in $\Psi_{2}$ that form a coalition, implying that $\CG(P_k,\Psi_2) \cong  S(2,1)$.~\QED
\end{proof}

\begin{proposition}
\label{prop8}
For $k \ge 10$, the path $P_k$ defines the $\CP$-graph $S_{2,2}$.
\end{proposition}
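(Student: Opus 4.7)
My plan is to mimic the constructive approach of Propositions~\ref{prop4}--\ref{prop7}, exhibiting for each $k \ge 10$ an explicit coalition partition $\Psi = \{A, B, C, D, E, F\}$ of $P_k$ whose coalition graph is isomorphic to $S(2,2)$. Since $S(2,2)$ consists of two adjacent central vertices of degree~$3$, each with two leaf neighbours, the sets $A$ and $B$ will play the roles of the central vertices, and four singletons $C, D, E, F$ will play the roles of the four leaves. Concretely, I would pick indices $i_1 < i_2$ near one end of the path and $i_3 < i_4$ near the other end, set $C = \{s_{i_1}\}$, $D = \{s_{i_2}\}$, $E = \{s_{i_3}\}$, $F = \{s_{i_4}\}$, and let $A$ and $B$ partition $V(P_k) \setminus \{s_{i_1}, s_{i_2}, s_{i_3}, s_{i_4}\}$ into two alternating combs of odd-indexed and even-indexed vertices, patched at the endpoints in the same style as in Propositions~\ref{prop5}--\ref{prop7}.

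Once the construction is in hand, what must be verified is that exactly five pairs of parts form coalitions, namely $\{A,B\}$, $\{A,C\}$, $\{A,D\}$, $\{B,E\}$, and $\{B,F\}$. The affirmative checks are straightforward: $A \cup B$ misses only four vertices, each of which has a neighbour in $A \cup B$, so $A \cup B$ dominates $P_k$; and by choosing $i_1, i_2$ so that $\{s_{i_1}\}$ and $\{s_{i_2}\}$ lie in the small gap left by $A$ near one end of the path, augmenting $A$ with either of these singletons restores domination, with the symmetric statement holding for $B$ and $\{s_{i_3}\}, \{s_{i_4}\}$.

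The main obstacle is the list of non-edges, since each of $A$ and $B$ carries a large share of $V(P_k)$ and a pair of well-chosen singletons could easily dominate $P_k$ on its own. The remedy is to locate the pair $\{s_{i_1}, s_{i_2}\}$ close enough to one endpoint that $B$ together with either of these singletons still fails to dominate a vertex near that endpoint (so $B$ does not coalesce with $C$ or $D$), and symmetrically for $A$ with $\{s_{i_3}\}$ or $\{s_{i_4}\}$; and to space the four singletons so that no union of two of them dominates the path, killing all coalitions among $C, D, E, F$ themselves. As in earlier propositions, the parity of $k$ will force two parallel constructions (one for even $k$ and one for odd $k$), plus a handful of small cases (roughly $k \in \{10, 11, 12, 13\}$) that will most likely need to be handled by ad hoc partitions written out explicitly, as was done at the start of the proof of Proposition~\ref{prop6}.
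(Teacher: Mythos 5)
Your constructive plan is workable in principle---maximum coalition partitions of $P_k$ with six parts do exist for $k\ge 10$, and their coalition graphs are indeed $S(2,2)$---but as written it is a blueprint rather than a proof: the indices $i_1,\dots,i_4$ are never fixed, the two parity cases and the small cases $k\in\{10,11,12,13\}$ are explicitly deferred, and none of the five required coalitions or ten required non-coalitions is actually verified. Note also that the construction has no slack: by Proposition~\ref{prop:Delta}, each part of a coalition partition of a path lies in at most $\Delta(P_k)+1=3$ coalitions, and in $S(2,2)$ both centres have degree exactly $3$, so your sets $A$ and $B$ must each realise the maximum; this is why the bookkeeping of non-edges that you flag as ``the main obstacle'' is genuinely delicate and cannot be waved through. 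The paper avoids all of this with a two-line counting argument: by Theorem~\ref{thm:path}(a), $\mathcal{C}(P_k)=6$ for all $k\ge 10$, so $P_k$ admits a coalition partition $\Psi$ with six parts; by Theorem~\ref{t:known1} the six-vertex graph $\CG(P_k,\Psi)$ belongs to $\cal F$, and $S(2,2)$ is the unique member of $\cal F$ on six vertices, whence $\CG(P_k,\Psi)\cong S(2,2)$. If you wish to keep the constructive route you must actually exhibit the partitions and check all $\binom{6}{2}=15$ pairs in each case; otherwise the counting argument is both complete and far shorter, since it requires no construction at all.
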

\begin{proof}
By Theorem~\ref{thm:path}(a), we have $C(P_n) = 6$ for all $k \ge 10$. Therefore, there is a coalition partition of $P_k$ such that the corresponding $\CP$-graph has six vertices.  On the other hand, the only $\CP$-graph with six vertices is $S_{2,2}$. Hence, the path $P_k$ with $k \ge 10$ defines the $\CP$-graph~$S_{2,2}$.~\QED
\end{proof}

\medskip
In Propositions~\ref{prop1}--\ref{prop8}, we presented results that determine which $\CP$-graphs can be defined by the path $P_k$. But there are still cases that have not yet been identified. For example, does $P_8$ define the $\CP$-graphs $F_2$ and $B$  or does $P_6$ define the $\CP$-graphs $2K_2$ and $P_4$?  To this end, we empirically checked the remaining cases.  Table~\ref{T11} summarizes the results of Corollary~\ref{cor1} and  Propositions~\ref{lem33}-\ref{prop8},  and the empirical results. In the table, we used the letters $Y, N, y$ and $n$.  Let $T(i,j)$ be the cell of the table in row $i$ and column $j$. Let $\CP_i$ be  the $\CP$-graph in row $i$,  and let $P_j$ be the path in column $j$. When $T(i,j)$ is equal to a uppercase letter, it means that  using  Corollary~\ref{cor1} and  Propositions~\ref{lem33}-\ref{prop8}, we have obtained the value of $T(i,j)$, and when $T(i,j)$ is equal to a lowercase letter, it means that we empirically have obtained it. If $T(i,j)\in \{Y, y\}$, it means that the $\CP$-graph $\CP_i$ can be defined by the path $P_j$, and if $T(i,j)\in \{N, n\}$, it means that  the $\CP$-graph $\CP_i$ cannot be defined by the path $P_j$.

\begin{table}[hbt]
		\centering
		\begin{tabular}{|l|l|l|l|l|l|l|l|l|l|l|}
			\hline
			& $P_1$ & $P_2$ & $P_3$ & $P_4$ & $P_5$ & $P_6$ & $P_7$ & $P_8$ & $P_9$ & $P_{k\ge 10}$ \\ \hline
			$K_1$&$Y$	  &   $N$    &    $N$   &   $N$    &  $N$     &   $N$    &    $N$   &   $N$    &    $N$   &       $N$         \\ \hline
			$K_2$            &     $N$  &   $n$    &  $Y$     &   $Y$    &    $Y$   & $Y$      &$Y$       &$Y$       &$Y$       &$Y$                \\ \hline
			$\overline{K_2}$ &$N$       &  {$Y$}     &    $n$   & $N$      &$N$       & $N$      &$N$       &   $N$    &  $N$     & $N$               \\ \hline
			$K_1\cup K_2$    & $N$      &     $N$  &  {$Y$}     & $N$      &$N$       & $N$      &$N$       &   $N$    & 	 $N$     & $N$      \\ \hline
			$P_3$            &    $N$   &     $N$  &     {$N$}  &     $Y$  &   $Y$    &  $Y$     & $Y$      & $Y$      &  $Y$     & $Y$               \\ \hline
			$K_3$            &    $N$   &    $N$   &       $N$&    $n$     &       $y$&    $Y$   &   $Y$    & $Y$      & $Y$      & $Y$               \\ \hline
			$K_{1,3}$        &     $N$  &    $N$   &    $N$   &     $n$  &     $n$   &    $Y$   &   $Y$    & $Y$      & $Y$      & $Y$         \\ \hline
			$2K_2$            &      $N$ &    $N$   &     $N$  &      $n$ &      $n$  &       $y$   &      $y$ &   $Y$    &  $Y$     &  $Y$              \\ \hline
			$P_4$            &       $N$&       $N$&       $N$&       $n$&      $y$ &       $y$   &   $Y$    &       $Y$    &  $Y$     &  $Y$                     \\ \hline
			$C_4$            &       $N$&      $N$ &       $N$&   $Y$    &      $Y$ &  $Y$     &  $Y$     &      $Y$ &   $Y$    &       $Y$         \\ \hline
			$F_1$            &       $N$&       $N$&       $N$&      $n$ &      $Y$ &  $Y$     &  $Y$     &      $Y$ &   $Y$    &       $Y$         \\ \hline
			$K_4-e$         &       $N$&       $N$&      $N$ &     $n$  &       $n$ &  $Y$     &  $Y$     &      $Y$ &   $Y$    &       $Y$         \\ \hline
			$P_2\cup P_3$    & $N$      & $N$      &  $N$     &    $n$   &    $n$   &  $Y$     &  $Y$     &      $Y$ &   $Y$    &       $Y$         \\ \hline
			$F_2$            &       $N$&   $N$    &       $N$&     $n$  &      $n$ &      $n$ &    $n$   &       $n$ &    $Y$   &  $Y$              \\ \hline
			$B$              &       $N$&     $N$  &       $N$&   $n$    &      $n$ &      $n$ &   $n$    &       $n$&   $Y$    & $Y$               \\ \hline
			$P_5$            &     $N$  &   $N$    &     $N$  & $n$      &    $n$   &    $n$   &  $Y$     &$Y$       &  $Y$     &     $Y$           \\ \hline
			$S(2,1)$        &  $N$     &    $N$   &  $N$     &     $n$  & $n$      &      $n$ &       $y$&  $y$     &  $Y$     &  $Y$              \\ \hline
			$S_{2,2}$        &   $N$      &     $N$    &   $N$      &     $n$    &   $n$      &    $n$     &       $n$  &    $n$     &    $n$     &  $Y$              \\ \hline
		\end{tabular}	\caption{The $\CP$-graphs that can be defined by any path $P_k$ with $k\ge 1$.}
		\label{T11}
\end{table}
	
In the following, for the cases that $T(i,j)=y$, we present a coalition partition of the path $P_j$  corresponding  to the $\CP$-graph $\CP_i$. \\ [-22pt]
\begin{enumerate}
\item[$\bullet$] The path $P_8$ and  the $\CP$-graph $S(2,1)$:  $\{\{1,4\}, \{2,6,8\},\{3\}, \{5\}, \{7\}\}$.
\item[$\bullet$] The path $P_7$ and  the $\CP$-graph $S(2,1)$: $\{\{1,4\}, \{2,6\},\{3\}, \{5\}, \{7\}\}$. 
\item[$\bullet$] The path $P_7$ and  the $\CP$-graph $2K_2$: $\{\{1,7\}, \{2\},\{3,4\}, \{5,6\}\}$.
\item[$\bullet$] The path $P_6$ and  the $\CP$-graph $P_4$: $\{\{1,4\}, \{2\},\{3,5\}, \{6\}\}$.
\item[$\bullet$] The path $P_6$ and  the $\CP$-graph $2K_2$: $\{\{1,6\}, \{2\},\{3,4\}, \{5\}\}$.
\item[$\bullet$] The path $P_5$ and  the $\CP$-graph $P_4$: $\{\{1\}, \{2\},\{3,4\}, \{5\}\}$.
\item[$\bullet$] The path $P_5$ and  the $\CP$-graph $K_3$: $\{\{1,5\}, \{2\},\{3,4\}\}$.
\end{enumerate}

Let $\NC(P_i)$ be the number of $\CP$-graphs that can be defined by $P_i$. From Table \ref{T11}, we can readily obtain the following result.
	
\begin{theorem}
\label{thma}
It holds that $\NC(P_1)=\NC(P_2)=1$, $\NC(P_3)=2$, $\NC(P_4)=3$, $\NC(P_5)=6$, $\NC(P_6)=10$, $\NC(P_7)=\NC(P_8)=12$, $\NC(P_9)=14$, and $\NC(P_k)=15$ for any integer $k\ge 10$.
\end{theorem}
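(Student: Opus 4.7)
The plan is to obtain each value of $\NC(P_k)$ by reading off the column of Table~\ref{T11} corresponding to $P_k$ (respectively $P_{k\ge 10}$): $\NC(P_k)$ equals the number of $Y$ or $y$ entries in that column. The bulk of the argument therefore consists of verifying each cell of the table.

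For the positive (realizability) entries, every uppercase $Y$ is certified by one of Propositions~\ref{lem33} and \ref{prop1}--\ref{prop8}, each of which exhibits an explicit coalition partition $\Psi$ of the relevant path $P_k$ whose coalition graph is isomorphic to the target $\CP$-graph in $\mathcal{F}$. The lowercase $y$ entries are certified by the seven explicit coalition partitions listed in the bulleted list immediately preceding the theorem statement; for each of these, one merely checks that the given blocks are pairwise disjoint and cover $V(P_k)$, that each block is either a singleton dominating set or has a coalition partner, and that the pairs of blocks whose unions dominate $P_k$ while both blocks are individually non-dominating produce exactly the required adjacency pattern.

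For the negative (non-realizability) entries, the uppercase $N$ entries follow from previously established structural results: Corollary~\ref{cor1} disposes of almost every $N$ in the columns for $P_1$, $P_2$, and $P_3$; Proposition~\ref{lem33}(d) handles $\overline{K_2}$ and $K_1\cup K_2$ for all $k\ge 4$; and Theorem~\ref{thm:path}(a) forces $|V(\CG(P_k,\Psi))| = |\Psi| \le \mathcal{C}(P_k) \le 6$, which rules out any $\CP$-graph whose order exceeds $\mathcal{C}(P_k)$.

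The main obstacle is handling the lowercase $n$ entries, which correspond to non-realizations that are not forced by the order bound or by any of the structural propositions already proved (for instance, that $P_8$ does not define $F_2$, or that $P_6$ does not define $S(2,1)$). These cases require a finite but nontrivial exhaustive analysis, which I would carry out computationally: for each $k$ with $4\le k \le 9$, enumerate every set partition $\Psi$ of $V(P_k)$, discard those that fail the coalition-partition conditions, compute $\CG(P_k,\Psi)$ and record its isomorphism class among the $18$ graphs of $\mathcal{F}$, and collect the set of realized $\CP$-graphs. Since the Bell number $B_9 = 21147$ bounds the number of partitions to inspect, the search is entirely feasible, and its output confirms each $n$ entry in Table~\ref{T11}. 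Summing the $Y$ and $y$ entries column by column then yields the claimed values of $\NC(P_k)$, completing the proof.
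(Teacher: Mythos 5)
Your proposal is correct and follows essentially the same route as the paper: the paper's proof of Theorem~\ref{thma} is precisely to read the column totals off Table~\ref{T11}, whose uppercase entries are certified by Corollary~\ref{cor1} and Propositions~\ref{lem33}--\ref{prop8} and whose lowercase entries are settled by an exhaustive computational check (the paper's ``empirical'' verification), exactly as you describe. Your explicit bound via the Bell number $B_9$ merely makes concrete the finite search the paper leaves implicit.
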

	
By Theorem \ref{thma}, there is no path $P_k$ that defines all 18 $\CP$-graphs, yielding the following result.
	
\begin{theorem}
There is no universal coalition path.
\end{theorem}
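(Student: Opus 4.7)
The statement is an immediate corollary of Theorem~\ref{thma}, and the plan is essentially to read off that theorem and compare against $|\mathcal{F}| = 18$. A universal coalition path $P_k$ would by definition have to define every one of the $18$ graphs in $\mathcal{F}$, and hence satisfy $\NC(P_k) = 18$. However, Theorem~\ref{thma} records the exact values: $\NC(P_k) \in \{1, 1, 2, 3, 6, 10, 12, 12, 14\}$ for $k \in [9]$, and $\NC(P_k) = 15$ for every $k \ge 10$. In particular $\NC(P_k) \le 15 < 18$ for every positive integer $k$, so no universal coalition path can exist.

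For a cleaner combinatorial reformulation that isolates which members of $\mathcal{F}$ prevent universality, I would invoke Proposition~\ref{lem33}: the graphs $K_1$, $\overline{K_2}$, and $K_1 \cup K_2$ are defined respectively only by $P_1$, $P_2$, and $P_3$, the obstruction in each case being that their corresponding coalition partitions must contain a singleton dominating set, which forces $\gamma(P_k) = 1$ and hence $k \le 3$. Since no single path can simultaneously coincide with $P_1$, $P_2$, and $P_3$, these three graphs cannot all appear in the list of $\CP$-graphs defined by one common path, which already rules out universality by itself. There is no real obstacle in either approach: once Theorem~\ref{thma} is in hand the argument reduces to a one-line comparison, so the conclusion is essentially a bookkeeping observation.
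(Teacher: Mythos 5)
Your first paragraph is precisely the paper's own argument: the paper derives the theorem in one line from Theorem~\ref{thma}, observing that $\NC(P_k)\le 15<18=|\mathcal{F}|$ for every $k$. Your second argument is a genuinely different and arguably more robust route, since it isolates three specific obstructing graphs and relies only on Proposition~\ref{lem33}, whereas Theorem~\ref{thma} rests partly on the empirical entries of Table~\ref{T11}. One small caveat: Proposition~\ref{lem33}(d) (and your ``singleton dominating set forces $\gamma(P_k)=1$, hence $k\le 3$'' observation) only excludes $k\ge 4$ for $\overline{K_2}$ and $K_1\cup K_2$, so the literal claim that these graphs are defined \emph{only} by $P_2$ and $P_3$ respectively still needs the cases $k\le 3$, which the paper handles partly empirically. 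But your conclusion is unaffected: by Proposition~\ref{lem33}(a) the only path defining $K_1$ is $P_1$, and $P_1$ trivially defines no coalition graph on two or more vertices, so no path can define all of $K_1$, $\overline{K_2}$, and $K_1\cup K_2$. Either version is a correct proof.
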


\section{Conclusion}

In this paper, we characterized all graphs $G$ of order~$n$ with $\delta(G)\le 1$ and $\mathcal{C}(G)=n$. Furthermore, we characterized all trees $T$ of order~$n$ with $\mathcal{C}(T)=n$ and all trees $T$ of order~$n$ with $\mathcal{C}(T)=n-1$. On the other hand, we theoretically and empirically determined the number of coalition graphs that can be defined by all coalition partitions of a given path $P_k$.  Furthermore, we showed that there is no universal coalition path. It remains an open problem to characterize all trees $T$ of order~$n$ with $\mathcal{C}(T)=n-k$ for all $k$ where $2 \le k \le n-2$. It would also be interesting to determine whether there is a linear-time algorithm to compute the coalition number of a given tree.

\medskip

\end{document}